\documentclass[11pt,english]{article}
\usepackage[latin9]{inputenc}
\usepackage{geometry}
\usepackage{verbatim}
\usepackage{parskip}
\usepackage{float}
\usepackage{amsthm}
\usepackage{amsmath}
\usepackage{amssymb}
\usepackage{mathtools}
\usepackage{enumitem}
\usepackage[dvipsnames]{xcolor}
\usepackage[backref=page,linktocpage=true,breaklinks,colorlinks,citecolor=PineGreen,linkcolor=Violet]{hyperref}
\usepackage{cleveref}
\usepackage{fullpage}
\usepackage{tikz}
\usepackage{forest}
 \usetikzlibrary{positioning,decorations.pathreplacing}
 \usetikzlibrary{calc,arrows.meta,shapes.geometric,trees}
 \usepackage{subcaption}

\makeatletter

\floatstyle{ruled}
\newfloat{algorithm}{tbp}{loa}
\providecommand{\algorithmname}{Algorithm}
\floatname{algorithm}{\protect\algorithmname}

  \newtheorem{theorem}{Theorem}
  \newtheorem{lemma}[theorem]{Lemma}
\newtheorem{claim}[theorem]{Claim}
\numberwithin{theorem}{section}
  \numberwithin{equation}{section}
  \theoremstyle{definition}
  \newtheorem{defn}[theorem]{\protect\definitionname}
  \newtheorem{example}[theorem]{Example}

  \newtheorem{observation}[theorem]{Observation}

  \theoremstyle{remark}

\theoremstyle{plain}
\newtheorem{thm}[theorem]{\protect\theoremname}
  \theoremstyle{plain}
  
  \theoremstyle{plain}

\usepackage{complexity,color,bm}
\usepackage{colortbl}

\newcommand{\Alg}{{\mathsf{Alg}}}
\providecommand{\Opt}{\ensuremath{\mathsf{Opt}}}
\providecommand{\poly}{\ensuremath{\mathsf{poly}}}

\newcommand{\calV}{{\cal V}}
\providecommand{\R}{{\mathbb{R}}}
\providecommand{\E}{{\mathbb{E}}}

\newcommand{\mF}{\mathcal{F}}
\newcommand{\ignore}[1]{}

\newcommand{\ot}{\overline{t}}

\definecolor{gray-comment}{gray}{0.5}

\theoremstyle{plain}

\makeatletter
\newtheorem*{rep@theorem}{\rep@title}
\newcommand{\newreptheorem}[2]{%
\newenvironment{rep#1}[1]{%
 \def\rep@title{#2 \ref{##1}}%
 \begin{rep@theorem}}%
 {\end{rep@theorem}}}
\makeatother

\newreptheorem{rcor}{Corollary}
\newreptheorem{rthm}{Theorem}

\usepackage{thmtools}
\usepackage{thm-restate}
\makeatother

\usepackage{babel}
  \providecommand{\definitionname}{Definition}
  \providecommand{\lemmaname}{Lemma}
  \providecommand{\propositionname}{Proposition}
\providecommand{\theoremname}{Theorem}

\newcounter{note}[section]

\title{Secretary, Prophet, and Stochastic Probing via Big-Decisions-First}

\author{ Aviad Rubinstein\thanks{(aviad@cs.stanford.edu) Department of Computer Science, Stanford University. Supported in part by David and Lucile Packard Fellowship.} \and Sahil Singla\thanks{
        (ssingla@gatech.edu)
        School of Computer Science,
        Georgia Tech.
        Supported in part by NSF awards CCF-2327010 and CCF-2440113.
        }}
\date{\today}

\begin{document}
\maketitle

%\snote{Old title: Big Decisions First: Near-Optimal Bounds for Secretary, Prophet, and Stochastic Probing  Beyond Binary Values}

\begin{abstract}
  We revisit three fundamental problems in algorithms under uncertainty: the Secretary Problem, Prophet Inequality, and Stochastic Probing, each subject to general downward-closed constraints. When elements have binary values, all three problems admit a tight $\tilde{\Theta}(\log n)$-factor approximation guarantee. 
    For general (non-binary) values, however, the best known algorithms lose an additional $\log n$ factor when discretizing to binary values, leaving a quadratic gap of $\tilde{\Theta}(\log n)$ vs.\ $\tilde{\Theta}(\log^2 n)$.
    \medskip

    We resolve this quadratic gap for all three problems, showing $\tilde{\Omega}(\log^2 n)$-hardness for two of them and an $O(\log n)$-approximation algorithm for the third. While the technical details differ across settings, and between algorithmic and hardness proofs, all our results stem from a single core observation, which we call the \emph{Big-Decisions-First Principle:}
     Under uncertainty, it is better to resolve high-stakes (large-value) decisions early.
%In the face of uncertainty, it is better to make the big, high-stake decisions first, and defer smaller decisions for last.  
    %\end{quote}
\end{abstract}

%\bigskip

% \setcounter{tocdepth}{1}
%  {\small
%  %\begin{spacing}{0}
%     \tableofcontents
%     \thispagestyle{empty}
%  %\end{spacing}
%  }

%TO DO LIST
% \begin{enumerate}
    %\item Abstract (1st draft done - Sahil to make next pass)
 %   \item Sec 2 Techn Overview pass (Sahil): Made a pass, might revist again
%    \item Secretary Pb section pass (Sahil): Working on now
 %   \item Formal problem definitions (Sahil): Added. Aviad take a pass
 %   \item Related works + discussion of XOS vs norms vs downward closed constraints (Sahil)
  %  \item Resolving outstanding comments (both)
  %  \item Error correcting code discussion (Aviad): Added - Sahil to take a pass! 
   %  \item Intro (Sahil made another pass, next Aviad's pass)
 %    \item Combined figure without ECC for PI (Sahil): I made some changes. I am not sure what we want to change/add. 
    %\item  \Aviad{should we remove the newpages between sections?}
 %\end{enumerate}

% \setcounter{page}{0}
 \clearpage

 \section{Introduction}

% We study three fundamental problems in algorithm design under uncertainty. Each has a rich history in theoretical computer science, and all share a common core: we must select a feasible subset of $n$ elements with uncertain values to maximize total value. We briefly recall these classical problems before describing our results \Aviad{Formal definitions are deferred to the respective technical sections}.

We study three fundamental problems in algorithm design under uncertainty, where the goal is to select a feasible subset of $n$ elements with uncertain values to maximize total value. %, each has a rich history in theoretical computer science.
We briefly recall these classical problems before describing our results; formal definitions are deferred to the respective technical sections. %The objective is common to all three problems: 

 \textbf{Secretary Problem.} Elements (``secretaries'') have adversarially chosen values but arrive in a uniformly {\em random order}. Upon each arrival, the algorithm observes each element and must decide immediately and irrevocably whether to accept it. 

 \textbf{Prophet Inequality.} Element values are drawn from {\em known independent distributions}, but the elements arrive in an {\em adversarial order}. 
 As in the secretary problem, the algorithm observes each element and must decide immediately and irrevocably whether to accept it. 
 %\Aviad{why would we use different language here instead of highlighting the parallel to secretary?} The algorithm again must decide online, with no recall, whether to keep each revealed value.

 \textbf{Stochastic Probing.} We face two layers of feasibility constraints: the {\em outer constraint} specifies which elements can be ``probed'' to reveal their random values (drawn independently from known distributions); the {\em inner constraint} governs which of the probed elements can finally be selected. The goal is to compare the best adaptive policy to the best non-adaptive one (the \emph{adaptivity gap}).

All three problems are considered under arbitrary \emph{downward-closed feasibility constraints} (i.e., every subset of a feasible set is feasible).
For binary-valued
(unweighted)  instances, i.e.~where each element's value is either $0$ or $1$, these problems are essentially settled: closely related algorithms guarantee $O(\log n)$-approximations~\cite{Rubinstein16,LiLZ25}, and essentially the same counterexample~\cite{BabaioffIK07,Rubinstein16,LiLZ25} shows that this is almost tight.
%For binary-valued (unweighted)  instances, where each element's value is either $0$ or $1$, these problems are essentially settled: algorithms and lower bounds meet at $\tilde{\Theta}(\log n)$-approximations ($\tilde{\Theta}(\cdot)$ hides $\poly(\log\!\log n)$ factors) \cite{Rubinstein16, LiLZ25, BabaioffIK07}.
However, for general (weighted) values, the previously best known approach used a simple reduction to unweighted case by rounding to binary values at a random threshold from $T := \{1,1/2,1/2^2,\dots,1/\poly(n)\}$, % $\poly(n)$ factor range.
which loses an extra $|T| = \Theta(\log n)$  factor. %the instance to the binary case by thresholding values using a random threshold%. % discretizing values into $O(\log n)$ %``power-of-two'' \Aviad{``power of two'' is something different in online algorithms ;) I guess you meant ``powers of two''.} \Aviad{I think it's misleading to say that the reduction is by discretizing and that's why they lose a log factor, because we also discretize} levels
%, incurring an extra $O(\log n)$ loss. \snote{I think it's useful to mention here that the $\log n$ factor is not magical, but a simple trick of reducing to log levels after discretizing and choosing a random level; my previous phrasing was maybe confusing, it was not about discretizing but about the simple trick (often called "powers of 2" trick).} 
This leaves the same quadratic gap open across all three problems:

\begin{quote}
    {\bf Main Question:} Can we close the $\tilde{\Theta}(\log n)$ vs.\,$\tilde{\Theta}(\log^2 n)$ gap in approximation factors for the Secretary Problem, Prophet Inequality, and Stochastic Probing under arbitrary downward-closed constraints?
\end{quote}
This question has been posed explicitly in multiple venues, e.g.~by Bobby Kleinberg for the Secretary and Prophet settings (Open problem session at Simons Institute program on Algorithms and Uncertainty, Fall 2016) and recently by Li, Lu, and Zhang~\cite{LiLZ25} for Stochastic Probing, who note that ``bridging this gap likely requires deeper understanding of non-binary weight vectors''.

%Resolving the quadratic gap was mentioned as an explicit open problem, e.g.~by Bobby Kleinberg for Secretary Problem and Prophet Inequality\footnote{Open problem session at Simons Institute Fall 2016 program on Algorithms and Uncertainty.}, and more recently by~\cite{LiLZ25} for  Stochastic Probing% \footnote{``bridging this gap likely requires deeper understanding of the non-binary weight vectors in the XOS representation of the objective function $f$, representing an important direction for future research.'' Here, XOS representation means the same as general downward-closed inner constraints.}.

\subsection{Our Contributions} 
We resolve this quadratic gap, giving nearly tight (up to $\poly(\log\!\log n)$) upper and lower bounds for all three settings. What makes our findings particularly striking is that the Secretary and Prophet settings---which often behave similarly up to $O(1)$ factors---turn out to diverge sharply once general (non-binary) values are considered. 
Even more unexpectedly, both algorithmic and hardness results for all three problems stem from the same underlying principle:

%We are able to resolve this quadratic gap and obtain near-optimal (up to $\poly(\log\!\log n)$ factors) answers in all the three settings by establishing new upper and lower bounds. We find these answers, and in particular the contrast between the Secretary Problem and Prophet Inequality settings quite surprising. But perhaps more surprising is that all our results, including both upper and lower bounds, were derived   from the same core fundamental observation:
%, which we call {\em Big Decisions First Principle}:

\begin{quote}
    {\bf Big-Decisions-First Principle.} Under uncertainty, it is better to resolve high-stakes (large-value) decisions early, and defer smaller decisions until later.
%In the face of uncertainty, it is better to make the big, high-stake decisions first, and defer smaller decisions for last.  
\end{quote}

In our setting, this principle says that one should ideally handle large-valued elements first, selecting or probing them before smaller ones, so that uncertainty about ``big'' decisions is resolved early.
Surprisingly, this single idea explains the full range of behavior across the three classical models. The key to applying this principle to each problem is to understand to what extent each setting allows the algorithm to approximate the ideal big-values-then-low-values order.

%\vspace{-0.1cm}
%\begin{enumerate}[label=(\roman*),topsep=0pt,itemsep=-0.5ex,partopsep=1ex,parsep=1ex,leftmargin=0.75cm] 

%\begin{description}
   \paragraph{Worst-case order --- Prophet Inequality.} In the Prophet Inequality setting, the elements arrive in worst-case order. We prove a nearly-tight $\tilde{\Omega}(\log^2 n)$ lower bound by constructing an instance that perfectly reverses the ideal algorithm that starts  from high value elements.

\begin{theorem}[Informal \Cref{thm:PIHardness}]\label{thm:prophet_main}
For Prophet Inequality with arbitrary downward-closed constraints and an additive objective, there exist instances where no online algorithm can achieve in expectation better than an $\tilde{\Omega}(\log^2 n)$ approximation.
\end{theorem}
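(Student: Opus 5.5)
The construction composes the known binary-valued hard instance for downward-closed prophet inequality~\cite{BabaioffIK07,Rubinstein16} with itself across $L=\Theta(\log n/\log\log n)$ value scales, so that the adversarial order presents \emph{small-value} scales first. The base gadget I take from the binary lower bound: a downward-closed family over a ground set of $m$ elements with i.i.d.\ Bernoulli($p$) activity, in which the prophet (offline) can collect $\approx \log m/\log\log m$ active elements in a feasible set, but any online algorithm collects $O(1)$ in expectation. The reason is that each accepted element commits to a sub-family, and the surviving feasible sets shrink too fast. Its hardness comes from a ``tree/product'' structure: feasible sets are paths through a tree of branching factor $\approx\log m$. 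I will use a version with $\poly\log n$ elements per gadget, so that $L$ copies fit within $n$.

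\textbf{Step 1 (levels).} Build levels $\ell=1,\dots,L$, where level $\ell$ elements take value $v_\ell=B^{\ell}$ (with $B=\poly\log n$) when active, and the activity probabilities are tuned so that the prophet's optimal value from level $\ell$ alone is $\approx v_\ell\cdot k$, where $k=\tilde\Theta(\log n)$. The feasibility constraint couples levels: a set is feasible iff for each level its restriction is a feasible gadget set, \emph{and} the choices at lower levels constrain the higher levels. Concretely, each element accepted at level $\ell$ restricts which gadget copy is usable at higher levels. This is a tree of gadgets, where the gadget at level $\ell+1$ is indexed by the path chosen at level $\ell$.

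\textbf{Step 2 (order).} The adversary presents level $1$ (smallest) first, then level $2$, and so on. This is the reversal of big-decisions-first.

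\textbf{Step 3 (prophet's value).} The prophet, knowing all realizations, picks in each level the best path. Choosing a top-level-consistent path at every level costs nothing extra, so it gets $\sum_\ell k v_\ell\approx k v_L$. I need the prophet benchmark to be $\Omega(k\cdot L)$ times the algorithm's value, not just $k$. To achieve this I make values \emph{not} geometrically dominated: each level is given equal total weight by setting probabilities so that $\E[\text{level-}\ell\text{ contribution}]$ is the same for all $\ell$. Then $\mathrm{OPT}\approx L\cdot k$ in normalized units.

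\textbf{Step 4 (algorithm's value).} Argue that any online algorithm gets $O(\poly\log\log n)$ normalized units in total. The intuition is that accepting $j$ elements at level $\ell$ shrinks the available feasible structure at all higher levels by a factor exponential in $j$. So an algorithm either essentially skips low levels, gaining nothing there, or harvests them and destroys the high levels. The formal argument is a potential-function/charging argument: define $\Phi$ as the log of the number of surviving top-level paths. Each unit of value gained at any level costs $\Omega(\log\log n)$ of $\Phi$, and each level's binary hardness yields only $O(1)$ gain per ``fresh'' gadget. This gives a total of $O(\log n/\log\log n)$ divided appropriately, and the overall ratio is $\tilde\Omega(kL)=\tilde\Omega(\log^2 n)$.

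The main obstacle is Step 4: making the cross-level coupling strong enough that gains at low levels genuinely consume budget needed at high levels. The algorithm must not be able to cheaply mix (a little from each level), while the prophet still gets full value at every level simultaneously. My first attempt would be to make the coupling via an error-correcting-code-like indexing. Level-$(\ell+1)$ gadgets are indexed by codewords of the accepted level-$\ell$ set, so that the surviving structure after accepting a random-looking set is nearly independent of what the prophet would use. A union bound over the algorithm's possible acceptance patterns, using the binary lower bound of~\cite{Rubinstein16} as a black box per level, should then finish the argument.
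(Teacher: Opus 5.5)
There is a genuine gap, and it sits where you place the main obstacle: the construction in Steps 1--3 cannot produce the cross-level commitment that Step 4 relies on. Step 4 needs that once the algorithm harvests value at a low level, it has essentially \emph{no choice left} at any higher level.

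As you describe it, though, the level-$(\ell+1)$ gadget is indexed by (i.e.\ hangs below) the path chosen at level $\ell$. So after any commitment at level $\ell$, the residual instance still contains a complete, fresh gadget at every higher level. This is exactly the freedom you use in Step 3 to let the prophet optimize level by level, and the online algorithm has it too. Running the standard single-level strategy afresh at each level (e.g.\ accept the first available active element) gives $\Theta(1)$ units per level. That is $\Omega(L)$ units against $\mathrm{OPT}\approx kL$, which is only a $\tilde{O}(\log n)$ gap.

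Flipping the orientation alone does not help as long as the binary instance of \cite{BabaioffIK07,Rubinstein16} is used as a black box. There, a committed, random-looking feasible set (size $L$, activity $1/L$) is still worth $\Theta(1)$ units in expectation, the same as the online guarantee. So $L$ committed higher levels again add up to $\Theta(L)$ units.

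There are two further problems. By your own formula, a gadget on $m=\poly\log n$ elements has hardness only $\log m/\log\log m=\tilde{\Theta}(\log\log n)$. With gadgets large enough for $k=\tilde{\Theta}(\log n)$, the tree has vastly more than $n$ elements unless elements are shared. Separately, the potential argument does not control value: an element lying in many surviving sets can be worth a full unit while barely reducing $\log(\#\text{surviving paths})$. Your own accounting also yields $O(L)$ units rather than $O(\poly(\log\log n))$.

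The paper supplies the missing ingredients.
\begin{itemize}
\item \textbf{Orientation.} The \emph{high} values sit at the root and the low values at the leaves, which arrive first. So each low-value node determines its entire ancestor path.
\item \textbf{Three-scale separation per layer.} A layer-$\ell$ node is a set of $r_\ell=L^{2\ell}$ elements of value $1/r_\ell$, each active with probability $2/L^2$. A committed set is therefore worth only $2/L^2$. Yet each layer-$(\ell-1)$ node has $(L^{2L})^{d_\ell}$ children, so the prophet still finds a child of value $\Omega(1)$ at every layer (\Cref{lemma:PIAlgLB}).
\item \textbf{Double ECC.} There are vastly more nodes than elements, so layer-$\ell$ sets are encoded by pairs of random vectors in $[p]^{r_\ell}$ with $p=n^{0.3}$, the first vector determined by the parent. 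This gives the guarantee of \Cref{claim:ECC_PI}: siblings share at most $d_\ell=r_\ell/L$ elements, and non-siblings at most $d_{(<\ell)}=r_\ell/L^2$.
\end{itemize}
The second overlap bound is the crux. Holding more than $d_{(<\ell)}$ elements of a layer-$\ell$ set, worth at most about $1/L^2$, already pins down the parent and hence the whole root path.

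Soundness is then a count at the first layer $\ell$ (from the leaves) where the algorithm exceeds $d_{(<\ell)}$ elements:
\begin{itemize}
\item each layer below it contributes at most $1/L^2$;
\item layer $\ell$ contributes at most $d_\ell/r_\ell+2/L^2=O(1/L)$;
\item each layer above is a fixed set worth $2/L^2$ in expectation.
\end{itemize}
This totals $O(1/L)$ against $\mathrm{OPT}=\Omega(L)$.
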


Of course, even with the worst-case order of arrival of elements, it's not obvious how to construct the combinatorial constraints that connect low and high value elements.
Note that we cannot simply concatenate $\log n$ unrelated binary hardness instances for each of the different values because the algorithm could solve each of them separately. To get around this, we present a ``double error correcting code'' construction that ties  the different components together, ensuring that selecting even a small fraction of the low value elements heavily constrains the algorithm's choices among high value elements. (See Section~\ref{sec:overview} for an overview and Section~\ref{sec:prophet} for details.)

%Since $\tilde{O}(\log^2 n)$ upper bounds are already known for both additive \cite{Rubinstein16} and XOS \cite{RubinsteinS17} objectives, this establishes that the correct bound for Prophet Inequality is $\tilde{\Theta}(\log^2 n)$.
%\Aviad{we haven't introduced XOS yet}

\paragraph{Best-case order --- Stochastic Probing.}
In the Stochastic Probing problem, we establish an $\tilde{\Omega}(\log^2 n)$ adaptivity gap by constructing an instance where the adaptive policy can visit the elements in the ideal order of high-value first.

% The Stochastic Probing setting provides the ``mirror image'' of Prophet Inequality.
% An adaptive policy can explicitly implement the {Big Decisions First} strategy, probing high-value elements first and deciding later about smaller ones.
% We prove this adaptivity advantage is again quadratic, matching the Prophet separation in magnitude but in the opposite direction.

\begin{theorem}[Informal \Cref{thm:adapGapLB}]\label{thm:probing_main}
For Stochastic Probing with arbitrary downward-closed outer and inner constraints, there exist instances where every non-adaptive algorithm achieves at most an  $\tilde{\Omega}(\log^2 n)$ fraction of the optimal adaptive value.
\end{theorem}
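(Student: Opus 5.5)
The plan is to build a layered instance with $L = \Theta(\log n/\log\log n)$ value levels. Level $\ell$ has value $v_\ell = (\log n)^{c\ell}$, or some similar geometric scaling, so that any single level contributes a comparable share of the adaptive optimum. Level $\ell$ is populated by elements that are active (value $v_\ell$) with probability $p_\ell$, and zero otherwise.

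Within each level we reuse the binary-value hardness instance of \cite{BabaioffIK07,Rubinstein16,LiLZ25}. In that instance a random set of active elements contains a large feasible (inner) set of size $k$. Any set chosen without knowledge of the realization, however, contains only about $k/\tilde\Theta(\log n)$ active feasible elements. The adaptive policy will follow Big-Decisions-First. It first probes level $L$ (the highest value), finds which elements are active, and then commits to the corresponding feasible ``codeword''. This choice then determines which elements of level $L-1$ are allowed by the outer constraint to be probed, and so on down the levels. The outer constraint is a tree-like family: a probe set is feasible iff at each level it lies inside the block designated by the active pattern chosen at the level above. The inner constraint ties these choices together, so that a chain of choices consistent with the realizations is feasible and collects about $k$ active elements at every level. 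This gives adaptive value $\approx L \cdot k v$ after normalization.

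For the non-adaptive bound, fix any outer-feasible probe set $P$. The argument should proceed level by level, from top to bottom. At the top level, the binary hardness already shows $P$ captures only a $1/\tilde\Theta(\log n)$ fraction. The new ingredient is that, because $P$ was chosen without seeing the top-level realization, its lower-level portion is aligned with the correct branch only with small probability. The value it does get from lower levels is charged to ``guessing'' the upper-level patterns. Concretely, I would show by induction on levels that the expected inner-optimal value of $P$ is at most $\tilde O(1/\log n)$ times the value of a \emph{single} level. This uses a double error-correcting-code structure. Distinct top-level active patterns map to lower-level blocks that pairwise overlap little, so a non-adaptive $P$ spreading its probes across many blocks gets only a $1/\tilde\Theta(\log n)$ fraction within each block. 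The total is then roughly $\max$ rather than sum over levels, losing the factor $L$. Combining the two losses gives the ratio $L\cdot\tilde\Theta(\log n)=\tilde\Omega(\log^2 n)$.

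The main obstacle is the non-adaptive upper bound. A mixed strategy could probe a little in every block of every level and hope that the inner objective, an XOS-style max over feasible sets, combines values across levels. I would first try to design the inner constraint so that any inner-feasible set's value is dominated by a single level unless its level-$\ell$ part is consistent with the realized pattern at level $\ell+1$. I would then bound, via a union bound over codewords together with Chernoff concentration, the probability that a fixed non-adaptive $P$ contains a consistent chain with many active elements at two or more levels. The parameters $p_\ell$, block sizes, and code distance would be tuned so that this union bound over $n^{O(\log n)}$ codewords still beats the concentration. I expect this to be the delicate part, and the source of the $\poly(\log\log n)$ losses.
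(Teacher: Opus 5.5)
\textbf{The gap: your construction has no adaptivity within a level, and this caps the gap at $O(L)$.} In your instance the only adaptive advantage comes from using the realized top-level pattern to choose which lower-level block to enter. As you describe it, the adaptive policy ``probes level $L$, finds which elements are active, and then commits to the corresponding codeword.'' That top-level probe set is fixed before anything is observed, so a non-adaptive policy can probe exactly the same set.

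In stochastic probing the inner set is chosen \emph{after} the probed values are seen: the objective is $\mathbb{E}[\max_{T\in\mathcal{F}_{\text{in}}}\sum_{i\in S\cap T}X_i]$. By your own premise, the active elements contain a large feasible codeword. So the non-adaptive policy collects a full level's worth of value at the top level, and likewise at any single level it commits to. It therefore gets $\Omega(1)$ levels' worth against the adaptive $O(L)$, so the gap is only $O(L)=\tilde O(\log n)$.

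The binary hardness you invoke (``any set chosen without knowledge of the realization contains only $k/\tilde\Theta(\log n)$ active feasible elements'') is the prophet/secretary statement about committing to the inner set blindly. That is not the non-adaptive policy's handicap here. Hence your step ``at the top level, the binary hardness already shows $P$ captures only a $1/\tilde\Theta(\log n)$ fraction'' fails for your instance. Separately, the outer and inner constraints must be fixed set systems; they cannot be ``designated by the active pattern.''

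\textbf{What the paper does instead.} Each level must itself be a sequential adaptive instance. The paper replaces every meta-tree node by a block that is an $L^2$-ary tree of depth $r_\ell=L^{2\ell}$. The outer constraint is caterpillars in the concatenated tree (a root-leaf path plus all edges hanging off it), and the inner constraint is root-leaf paths. Every element is active with probability $2/L^2$ and, at layer $\ell$, has value $1/r_\ell$. The adaptive policy probes the $L^2$ children of its current node and steps to an active one, which exists with constant probability, so it earns $\Omega(1)$ per layer. A non-adaptive spine earns only $O(1/L^2)$ per layer.

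What must then be controlled is how much the \emph{legs} of the probed caterpillar can feed some other root-leaf path $P$ chosen after the fact. This needs a tree-code-type guarantee that your plan lacks. Within a block, the legs of any caterpillar meet any other root-leaf path in at most $d_\ell=r_\ell/L$ elements beyond their common prefix. Across different blocks they meet in at most $d_{(<\ell)}=r_\ell/L^2$ heights.

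The paper gets this by labeling each edge with a pair drawn from a universe of size $p^2$ per (layer, height): the first coordinate is the block's random family-vector entry at that height, the second an independent per-edge symbol. This explicit code is necessary because blocks have astronomically many edges, so the binary instance cannot be reused as a black box. With these bounds the non-adaptive value is at most spine $O(1/L)$, plus at most $1/L$ at the deviation layer, plus at most $1/L^2$ for each deeper layer. This bound is deterministic given the instance, so no Chernoff or union bound over realizations is needed.
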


%\snote{Since next Sec on techniques is long, do we want to write a couple of lines on our hardness techniques here,  highlighting double-error correcting codes.}

\paragraph{Secretary Problem --- Average-case order.}
% Finally, the Secretary Problem represents the ``average-case'' middle of this spectrum.
% The random arrival order allows us to approximately simulate the ideal ``decreasing-by-value'' sequence.
% Our algorithm partitions the stream into $\tilde{\Theta}(\log n)$ value blocks (powers of two), processes these blocks in decreasing order, and achieves an $O(\log n)$ approximation.
Should the uniformly random order of the secretary problem look more like the worst-case or best-case scenario? We design an algorithm that partitions the stream of elements into $\tilde{\Theta}(\log n)$ blocks, and each block only considers elements from a single value (rounded to nearest power of $2$). Across blocks the element-values are considered in decreasing order, mimicking the ideal scenario and guaranteeing an $O(\log n)$-factor approximation for the problem with general values. In fact, we're also able to extend this algorithm, with the same approximation guarantee, to XOS valuation functions which allows for interaction between values of selected elements~\cite{RubinsteinS17}.

\begin{theorem}[Informal \Cref{thm:mainSecretFormal}]\label{thm:sec_main}
For the Secretary Problem with arbitrary downward-closed constraints and an additive (or even submodular/XOS) objective, there exists an online algorithm achieving an $O(\log n)$-approximation to the offline optimum.
\end{theorem}

Together with the $\tilde{\Omega}(\log n)$ hardness of \cite{BabaioffIK07}, this settles the Secretary Problem at $\tilde{\Theta}(\log n)$.
It is interesting to note  that random order behaves as well as the ``best possible'' order up to $\poly(\log\!\log(n))$ factors.%, even for general (non-binary) values.

%This separation between the three models is conceptually striking.
% For all natural classes of downward-closed constraints studied so far (like partition, graphic, laminar, and even matroid), the Secretary, Prophet, and Probing problems are known or conjectured to match within constant factors.
% Our results show that under fully general downward-closed constraints, this equivalence breaks sharply: $\tilde{\Theta}(\log n)$ for Secretary versus $\tilde{\Theta}(\log^2 n)$ for Prophet and Stochastic Probing.
% Together, these form a clean spectrum from average-case (Secretary) to worst-case (Prophet) to best-case (Probing) order, unified by one guiding principle: \emph{who gets to make the big decisions first.}

%\end{description}
%\end{enumerate}   

 \subsection{Related Work}

\textbf{Secretary and Prophet Inequality.}
These problems originated in optimal stopping theory \cite{Dynkin, KrengelS77, KrengelS78}. They have been deeply explored in TCS in the last two decades, due to their applications in algorithmic mechanism design and beyond worst-case analysis of online algorithms.  We refer the interested readers to survey \cite{Lucier17} and book chapters \cite{GS-Book20} and \cite[Chapter 30]{EIV-Book23}, and below we only discuss the results on general downward-closed constraints.

Secretary and Prophet inequality problems for general downward-closed constraints were first proposed in \cite{BabaioffIK07}. This paper, which also proposed the famous matroid secretary conjecture,  showed an $\tilde{\Omega}(\log n)$-hardness for general downward-closed constraints  (see journal version \cite{babaioff2018matroid}). Rubinstein \cite{Rubinstein16} designed matching $\tilde{O}(\log n)$-competitive algorithm for both secretary and prophet problems with binary values. Subsequently, \cite{RubinsteinS17} generalized this problem to XOS/subadditive combinatorial functions, and obtained $\poly(\log n)$-competitive algorithms. In particular, they observed that  downward-closed constraints can capture binary XOS functions (i.e., where clauses have $\{0,1\}$ value coefficients) by creating a feasible subset $A$ whenever there is a clause with value $1$ for all elements in $A$. However, both the approaches in \cite{Rubinstein16} and \cite{RubinsteinS17} only work for binary values, leaving the $\tilde{\Theta}(\log n)$ vs.\,$\tilde{\Theta}(\log^2 n)$ gap open for additive and XOS objectives.

 %\snote{talk about R. limited to bernoulli.  talk about RS idea to interpret XOS as another down-closed constraint. } 
 
\medskip
\textbf{Stochastic Probing.} 
This problem with outer and inner constraints was first formulated by Gupta and Nagarajan \cite{GN-ICALP13}, as a generalization of  stochastic   matching  \cite{CIKMR-ICALP09} and Bayesian mechanism design \cite{CHMS-STOC10}. In particular, they showed that the adaptivity gap is $O(k)$ when both the outer and inner constraints are intersections of $k$ matroids\footnote{In the original formulation \cite{GN-ICALP13}, the algorithm is always ``committed'' to take any probed active element, but subsequent work used ``stochastic probing'' to mean without commitment unless explicitly stated. We remark that our lower bound instance on adaptivity gap achieves ``best of both worlds'' with respect to commitment: it shows the separation even if the adaptive policy is constrained to be committed and the non-adaptive policy is uncommitted.}. Separately, \cite{ANS-WINE08} showed $O(1)$ adaptivity gaps when the inner constraint is a polymatroid (i.e., the objective function is submodular)  and the outer constraint is a matroid.
In a sequence of works \cite{BSZ-RANDOM19,GNS-SODA17,GNS-SODA16}, both these results were greatly generalized to show $O(1)$-adaptivity gaps for general downward-closed outer constraints and polymatroid inner constraints. 

The main open question left in the above line of work, originally due to \cite{GNS-SODA17}, was whether the adaptivity gap is at most $\poly(\log n)$  when both the outer and inner constraints are general downward-closed (i.e., the objective function is XOS). 
This conjecture was reformulated in terms of general norms (which are continuous analog of XOS functions as both are max-of-linear functions) in \cite{PRS-APPROX23,KMS-STOC24}, who made progress for special classes of norms such as symmetric and $p$-supermodular norms. Finally, the recent paper \cite{LiLZ25} achieved a breakthrough by showing $\poly(\log n)$ adaptivity gap when both the outer and inner constraints are general downward-closed. Although \cite{LiLZ25} obtains the tight $\tilde{\Theta}(\log n)$ adaptivity gap for binary value distributions, they left open the $\tilde{\Theta}(\log n)$ vs.\,$\tilde{\Theta}(\log^2 n)$ gap for general value distributions.

 \section{Technical Overview}\label{sec:overview}

 We begin by discussing prior approaches and then describe how to use our Big-Decisions-First Principle to obtain improved algorithmic and hardness bounds.

\medskip
\textbf{Prior Approaches for Binary Values.} At a high-level, the core idea behind previous algorithms \cite{Rubinstein16,RubinsteinS17,LiLZ25} for binary values  is the same\footnote{One exception is that \cite{Rubinstein16} gives a completely different algorithm for prophet inequality; but the same guarantee can also be obtained from \cite{Rubinstein16}'s secretary algorithm using \cite{AzarKW19}'s reduction from order-oblivious algorithms for secretary to (sample-based) prophet inequality.}:  we obtain (or hallucinate) a sample from the underlying distributions and make ``greedy''   decisions assuming that the sample represents the entire distribution. 
For concreteness, think of  secretary problem with an  additive binary objective. Here,  we view the first half of the stream  as ``sample'' elements to learn the distribution (no sample elements are selected), and then each arriving element is   selected  if it is feasible and simultaneously combines well with (a)~the set of elements selected so far and (b)~an optimal subset of the sample.  

 To analyze this algorithm, observe that at first the sample is representative of the entire distribution; but as the algorithm selects more elements, it is over-fitting to the sample. To quantify the level of over-fitting,   consider any possible \emph{fixed set} of elements  $A$ that could be selected  by the algorithm. It is possible to argue that the sample gives us an estimate on the compatibility of other elements with set $A$ with tail probability as low as $\approx 2^{-\Omega(|\Opt|)}$. 
 As long as our algorithm never attempts to select more than  $|A| \le |\Opt|/\log n$ elements, since there are only $n^{O(|\Opt|/\log n)} = 2^{O(|\Opt|)}$ such subsets, and therefore we can take a union bound over all of them to account for over-fitting. %Hence, our algorithm can only output a subset of size $\Omega(|\Opt|/\log n)$.

%  To analyze this algorithm, observe that at first the sample is representative of the entire distribution; but as the algorithm selects more elements, it is over-fitting to the sample. More formally,   consider any possible \emph{fixed set} of elements  $A$ that could be selected  by the algorithm. It is possible to argue that the sample gives us an estimate on the compatibility of \Aviad{other elements with set $A$?} set $A$ with other elements  with tail probability as low as $\approx 2^{-\Omega(|\Opt|)}$. Thus, \Aviad{``our algorithm will never output any small subset  $A$ of size $|A| \ll |\Opt|/\log n$'' doesn't follow from the previous sentence though!} our algorithm will never output any small subset  $A$ of size $|A| \ll |\Opt|/\log n$ since there are only $n^{O(|\Opt|/\log n)} = 2^{O(|\Opt|)}$ such subsets, and we can take a union bound over all of them to account for over-fitting. Hence, our algorithm can only output a subset of size $\Omega(|\Opt|/\log n)$.

\medskip  \textbf{Challenge in Going Beyond Binary Values.} Suppose we partition the elements   into $O(\log n )$ buckets based on their approximate value, where $W_C$ denotes the elements of value $C$.
In a typical worst case instance, each bucket has roughly the same total value, so buckets with lower values (per element) have more elements.
Suppose for simplicity that we only have two buckets: $W_H, W_L$, with high- and low-value elements, respectively, and let $\Alg_H,\Alg_L$ denote the respective subsets of elements selected by our algorithm. If we try to directly apply the greedy approach of the unweighted algorithm to weighted instances, the number of $W_L$ elements selected by the algorithm can quickly dwarf the total number of $W_H$ elements in the entire instance: $|\Alg_L| \gg |W_H|$. 
Now the concentration for $W_H$ elements has tail probability at least $2^{-|W_H|}$, so we can no longer take a union bound over all $n\choose{|\Alg_L|}$ choices of low elements.
%
%The challenge in directly applying the  algorithm for binary values to general values is that  the tail probability for $W_C$ elements  only holds like  $2^{-\Omega(|\Opt \cap W_C|)}$, i.e., exponentially in the number of elements of value $C$  contributing to the optimum. However, since the  algorithm might now be selecting a subset of many different values, it is no longer sufficient to only rule out small subsets $A$ of size $\ll |\Opt \cap W_C|/\log n$. Hence,  

Instead of the greedy approach, prior works simply  
reduces to the unweighted case by optimizing only for a single (random or optimal) value $C$.  In the worst case when there are $\log n$ buckets with roughly equal total contribution to the optimum, this naive approach loses another $\log n$ factor.

% \medskip  \textbf{Challenge in Going Beyond Binary Values.} Suppose we partition the elements   into $O(\log n )$ buckets based on their approximate value, where $W_C$ denotes the elements of value $C$. 
% %
% The challenge in directly applying the  algorithm for binary values to general values is that  the tail probability for $W_C$ elements  only holds like  $2^{-\Omega(|\Opt \cap W_C|)}$, i.e., exponentially in the number of elements of value $C$  contributing to the optimum. However, since the  algorithm might now be selecting a subset of many different values, it is no longer sufficient to only rule out small subsets $A$ of size $\ll |\Opt \cap W_C|/\log n$. Hence,  prior works would 
% optimize only for a single (random or optimal) value $C$.  In the worst case, each bucket has roughly equal total contribution to the optimum, so this approaches loses another $\log(n)$ factor. 

\medskip  
\textbf{Core Observation: Best-Case Order (``Big-Decisions-First'').}  
 Suppose that we could process the buckets $W_C$ in order of {\em increasing number of elements in $\Opt$}, i.e.~increasing $|\Opt \cap W_C|$. (In the aforementioned worst case where the total contribution per bucket is equal, this is equivalent to order of {\em decreasing element values}.) A key observation 
 to all our results is that in this wishful case, the cost of union bound over choice of smaller subsets doesn't affect the algorithm's performance on large subsets; so we could get an $\Omega(1/\log n)$-fraction of the optimal contribution from every bucket $W_C$ instead of optimizing for only one bucket. This technical observation forms the basis of our Big-Decisions-First Principle, and will guide both our secretary algorithm and hardness instances for prophet inequality and stochastic probing.

\subsection{Secretary Algorithm}

 %\medskip  \textbf{Algorithm for Secretary Problem.}
 Our $O(\log n)$-approximation algorithm for the secretary problem relies on one more key observation, besides our core observation  above. The  observation  is that in the binary case, the algorithm of~\cite{Rubinstein16} already obtains its entire $\Omega(1/\log n)$-approximation after processing only $(1/\log n)$-fraction of the ``real'' (non-sample/second-half) elements.
 %: As outlined above, until that point the algorithm is guaranteed to make progress at a near-optimal rate. 
Hence, at a high level,  our secretary algorithm  partitions the stream into $O(\log n)$ intervals, one for each bucket in decreasing order of value $C$. For each interval, the algorithm uses a (variant of) \cite{Rubinstein16}'s algorithm to collect a near-optimal subset of elements from the corresponding value bucket, while (i) respecting~the previous selections of higher-value elements, and (ii) optimizing for a matching subset of lower-value sample elements. As per our core observation, the sample of lower-value elements will tend to be a representative sample, even after a union bound over all selections of larger-value elements. 

 The final algorithm carefully balances a few more mundane but important details (see Section~\ref{sec:secretary}). For example, when extending this algorithm to XOS functions, we need to be careful when bucketing elements because their marginal values are different for different clauses. We also need to be careful to set up the analysis to take advantage of the strong concentration before partitioning into intervals --- even though we only need $O(1/\log n)$-fraction of the stream to actually collect values, we need the full stream (or at least $\Omega(1)$-fraction) to achieve optimal concentration.

\subsection{Hardness for Prophet Inequality}
In light of our core observation, the key to showing hardness for prophet inequality is to force the algorithm to process elements from the lowest value bucket to the highest, with (i)~roughly equal total contributions from each bucket, and (ii)~a (variant of) the tight $\tilde{\Omega}(\log(n))$ lower bound for binary values from~\cite{BabaioffIK07,Rubinstein16} inside each bucket. 

Let us quickly recall the basic hard instance for the binary case: Fix some $L = \tilde{\Theta}(\log(n))$ such that $L^L \ll n/L$. The hard instance of~\cite{BabaioffIK07,Rubinstein16} has $n/L$ disjoint feasible sets, each of size $L$; elements have nonzero value with probability $1/L$. In hindsight, the best of $n/L \gg L^L$ subsets will have value $\Omega(L)$, but as soon as the online algorithm selects a single element, it commits to one set --- and the rest of the set only contributes $O(1)$ in expectation. 

\medskip \textbf{Construction of the Hard Instance.}
To show $\Omega(L^2)$-hardness  beyond binary values, we  think of our hard instance distribution as an $L$-layered rooted tree,  where each vertex at distance $\ell>0$ from the root corresponds to a subset of elements of value $\in \{0,L^{-\ell}\}$. The feasible sets are the root-leaf paths (the union of subsets corresponding to all the vertices on such paths). Each element has nonzero value with probability $1/L^2$, and the elements in layers closest to the leaves arrive first,  forcing the online algorithm to make the small decisions first. See also Figure~\ref{fig:prophet-LB}.

 In the leaves layer, vertex-sets have $L^L$ elements, so to lower bound the prophet's offline $\Opt$ we would need to have $L^{\Omega(L^L)} \approx L^n $ vertex-sets to obtain a vertex-set with $\Omega(1)$-fraction of elements with  non-zero values. Since we only have a total of $n \ll L^n$ elements (across all layers), we can no longer use disjoint feasible subsets. Instead, we allow the subsets to intersect, and use an error-correcting code construction to guarantee that the intersection is small.

 The crux of our construction is in combining the different layers into one instance. To give the algorithm the disadvantage a-la Big-Decisions-First, we would like the property that selecting a large subset of small elements forces the algorithm to commit to a unique path to the root. And once the algorithm commits to a specific path to the root, each element in it is only nonzero with probability $1/L^2$ (so a total of $1/L^2$-fraction of $\Opt$ in expectation). In fact, we want the stronger property that even if the algorithm selects just a $1/L^2$-fraction of a given vertex-subset, it is already committed to the entire path to the root. We obtain this property using a ``double error-correcting code'' construction, which simultaneously guarantees: \begin{enumerate}
    \item high distance to all other subsets at the same layer; and
    \item even higher distance to subsets that don't share the same path from the root\footnote{This latter property is similar to ``list-decodable codes'' where up to the higher distance we can only guarantee to ``decode'' the subset to the ``list'' of subsets with the same root-parent path. (But note that for our purposes a simple construction based on random codes suffices.)}.
\end{enumerate} 

% Put this in your LaTeX preamble:
% \usepackage{tikz}

\begin{figure*}[t]
    \centering

    \scalebox{1.15}{
    %\begin{center}
%\centering
\begin{tikzpicture}[
  grow=down,
  level distance=12mm,
  level 1/.style={sibling distance=28mm},
  level 2/.style={sibling distance=18mm},
  blk/.style={draw, rounded corners, rectangle, minimum width=8mm, minimum height=5mm},
  edge from parent/.style={draw},
  edge from parent path={(\tikzparentnode.south)--(\tikzchildnode.north)}
]

% Depth 1 (root) → Depth 3 leaves
\node[blk] (root){\footnotesize Root (no subset)} % depth 1 (root)
  child { node[blk] {} % depth 2
    child { node[blk] (b1) {} } % depth 3
    child { node[blk] {} }
  }
  child { node[blk] {} % depth 2
    child { node[blk] {} } % depth 3
    child { node[blk] {} }
  }
  child { node[blk] (a2) {} % depth 2
    child { node[blk] {} } % depth 3
    child { node[blk] {} }
  };

% Optional: annotate branching factors (pure labels; no edges)

%\node[align=left, anchor=west] at ($(root.east)+(0.3,0.0)$) {\small root at $\ell=0$ has $(L^{2L})^{d_{1}}$ children};

%\node[align=left, anchor=west] at ($(a2.east)+(0.3,0.0)$) {\small each $\ell=1$ node has $(L^{2L})^{d_{2}}$ children};

\draw[-{Stealth[length=3.2mm,width=3.2mm]}, line width=1.2pt]
(-4.5,-2.4) -- (-4.5,0.5);

\node[align=left, anchor=west, rotate=90] at ($(b1.east)-(2.2,0)$) {order of arrival,\\ increasing values,\\ decreasing set sizes};

%\node[bottom, left=13pt, align=left, text width=3.2cm, rotate=90]{\tiny order of arrival}

\end{tikzpicture}
%\end{center}
}
    
     \raggedright
\small 
Feasible sets are unions of subsets along any root-leaf path. Subsets at different layers are disjoint, but same layer subsets may intersect---they share at most $1/L$ fraction of the elements if they have the same parent and at most $1/L^2$ fraction of the elements if their parents are different.

    \caption{Prophet inequality construction}
    \label{fig:prophet-LB}
\end{figure*}

\textbf{Overview of Analysis.}
To analyze this construction, we bound the contribution from three parts of any feasible set (path) chosen by the algorithm. Let $\ell^*$ be the depth (distance from root) of the last vertex for which  the algorithm obtains an $\omega(1/L^2)$-fraction of the corresponding subset.
\begin{description}
    \item[Layers $\ell' > \ell^*$.] By definition, the algorithm obtains at most $O(1/L^2)$-fraction of the feasible elements from these layers.
    
    \item[Layer $\ell^*$.] Similar to the analysis of the binary hard case, but also using the same layer intersection of error-correcting code construction, for this layer the algorithm obtains at best an $O(1/L)$-factor approximation, or $O(1/L^2)$-fraction of the total value of the entire instance.
    
    \item[Layers $\ell' < \ell^*$.] The elements in this layer appear after the algorithm selected $\omega(1/L^2)$-fraction of a layer-$\ell^*$ set. By the double error-correcting code construction, the algorithm has  committed to a unique path to the root, where each element is nonzero only with probability $1/L^2$.
\end{description}
In total, each of the three cases contributes at most $O(1/L^2)$-fraction of the total value.

\subsection{Large Adaptivity Gap for Stochastic Probing}
The intuition for stochastic probing is similar to prophet inequality, but here we use an optimal order instead of worst-case order: Recall that our goal is to lower bound the adaptivity gap, so we allow the adaptive benchmark to probe the high value elements first, then make decisions about lower value elements --- this gives it another advantage over the non-adaptive policy. The actual construction of the hard instance builds on and generalizes the construction for prophet inequality.

%The intuition for stochastic probing is at a high level the opposite of prophet inequality (best order vs worst case order). Specifically, we allow the adaptive benchmark to probe the high value elements first, then make decisions about lower value elements --- this gives it another advantage over the non-adaptive policy. The actual construction of the hard instance, however, builds on and generalizes the construction for prophet inequality: 

\textbf{Construction.} We again consider an $\ell$-layer tree, which, in the context of stochastic probing, we call the {\em meta-tree}. But now we replace each node (+ edges to $L^{\Theta(L^{\ell})}$ children) in the meta-tree with an $L^2$-ary  tree of depth roughly $L^{\ell}$. We call the new trees inside each layer {\em blocks} to distinguish them from the meta-tree. Together, the meta-tree with the blocks combine into one large {\em concatenated tree}. Like in the prophet inequality construction, each layer-$\ell$ element has value roughly $1/L^{\ell}$ with probability $1/L^2$, and $0$ otherwise. 

The ``inner'' (selecting) feasibility constraint is simply defined as root-leaf paths in the concatenated tree. 
The ``outer'' (probing) feasibility constraint is defined as root-leaf ``caterpillars'', i.e.~root-leaf paths + all the edges coming out of each node in the path. See also Figure~\ref{fig:trees}.

The vertices on each of the layers have to reuse elements, but we can bound the overlaps using a variant of the ``double error correcting code'' construction from prophet inequality. Specifically, the inner encoding of paths inside each block closely resembles tree codes~\cite{Schulman93,Schulman96}. Going beyond the standard path-vs-path distance guarantee of  tree codes, we use a stronger distance guarantee that bounds the intersection of the ``legs'' of one caterpillar with the ``spine'' (aka inner feasible set) of another caterpillar.

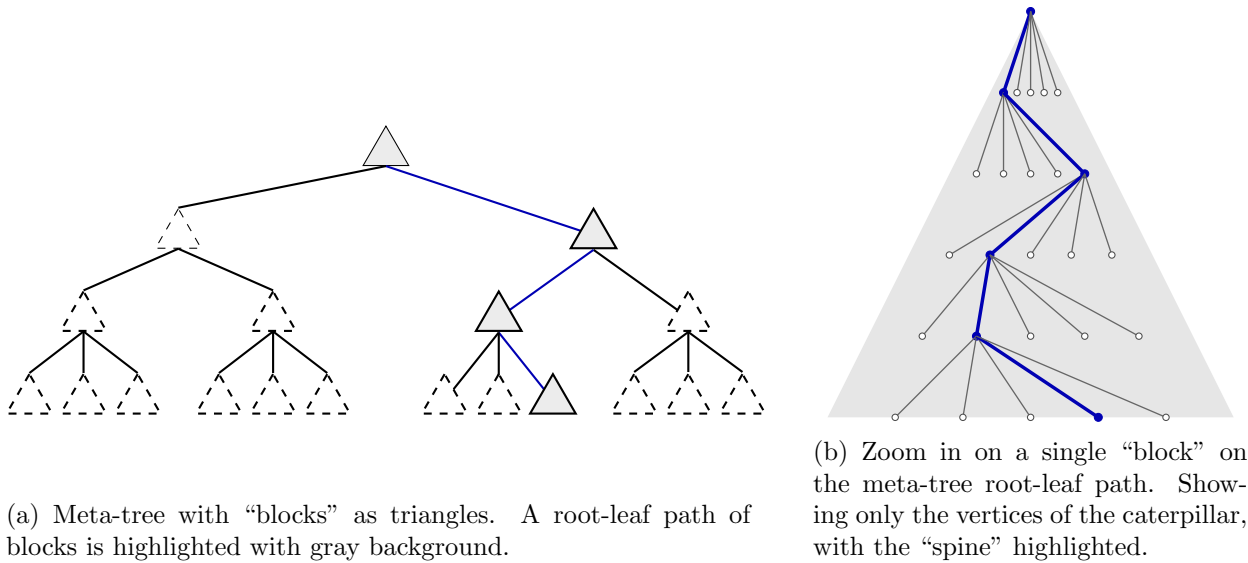
\begin{figure*}[ht]
\centering

\begin{subfigure}[b]{0.6\textwidth}
\centering
\begin{tikzpicture}[xscale = 0.6,
grow'=down,
level distance=11mm,
level 1/.style={sibling distance=92mm},
level 2/.style={sibling distance=42mm},
level 3/.style={sibling distance=12mm},
edge from parent/.style={draw, thick},
edge from parent path={(\tikzparentnode.south) -- (\tikzchildnode.north)},
tri/.style={
shape=regular polygon,
regular polygon sides=3,
% regular polygon rotate=90, % apex up (like \Delta)
draw=black, dashed,
fill=white,
minimum size=7mm, % medium-sized
inner sep=0pt
},
trip/.style={
shape=regular polygon,
regular polygon sides=3,
% regular polygon rotate=90, % apex up (like \Delta)
draw=black,
fill=gray!15,
minimum size=7mm, % medium-sized
inner sep=0pt
}
]

\node[trip] {} % root
child { [blue!70!black, very thick] node[trip] {}
child { [black, thick] node[tri] {}
child { node[tri] {} }
child { node[tri] {} }
child { node[tri] {} }
}
child { [blue!70!black, very thick] node[trip] {}
child { [blue!70!black, very thick] node[trip] {} }
child { [black, thick] node[tri] {} }
child { [black, thick] node[tri] {} }
}
}
child { node[tri] {}
child { node[tri] {}
child { node[tri] {} }
child { node[tri] {} }
child { node[tri] {} }
}
child { node[tri] {}
child { node[tri] {} }
child { node[tri] {} }
child { node[tri] {} }
}
};

\node[align=left] at (-1,-4.3) {};

%\node[align=left, anchor=west, rotate=90] at ($-(2.2,0)$) {order of arrival};

\end{tikzpicture}
\caption{Meta-tree with ``blocks'' as triangles. A root-leaf path of blocks is highlighted with gray background.}
\label{fig:tree-a}
\end{subfigure}\hfill
\begin{subfigure}[b]{0.35\textwidth}
\centering
\begin{tikzpicture}[scale=0.9,
  every node/.style={circle, draw=black, inner sep=0.6pt, minimum size=2.4pt},
  edge/.style={draw=black!60, line width=0.5pt},
  hil/.style={draw=blue!70!black, line width=1.3pt},
  sel/.style={fill=blue!70!black, draw=blue!70!black},
  normal/.style={fill=white, draw=black!70}
]

  % Parameters
  \def\H{6}     % total height of the big tree (in cm)
  \def\W{3}     % half width of the big tree at the base (in cm)
  \def\L{5}     % number of levels (root at 0, leaf at level L)
  \pgfmathsetseed{12345} % change or remove for different random paths per compile

  % Background: a big grey triangle representing the entire large tree
  \fill[gray!20] (-\W,-\H) -- (0,0) -- (\W,-\H) -- cycle;

  % Root (current node initialization)
  \def\currx{0}
  \def\curry{0}
  \node[sel, minimum size=3pt] at (\currx,\curry) {};

  % Vertical spacing between levels
  \pgfmathsetmacro{\ysep}{\H/\L}

  % Walk down the tree until reaching the bottom (a leaf at level L)
  % At each step:
  % 1) draw all children of the current node at the next level
  % 2) pick one child at random, highlight it and the edge to it
  \foreach \i in {0,...,\numexpr\L-1\relax} {

    % Random number of children (between 2 and 5)
    \pgfmathsetmacro{\c}{5}

    % y-coordinate of the next level
    \pgfmathsetmacro{\ynext}{-(\i+1)*\ysep}

    % Allowed half-width at the next level inside the triangle
    % From similar triangles: half-width = W * (level) / L
    \pgfmathsetmacro{\aw}{\W*(\i+1)/\L}

    % Horizontal spacing between children (capped so local fans don't get too wide)
    \pgfmathsetmacro{\dtemp}{2*\aw/(\c+1)}
    \pgfmathsetmacro{\d}{min(1.2,\dtemp)}

    % Compute a starting x so the c children are centered near \currx but clamped within [-aw, aw]
    \pgfmathsetmacro{\xn}{\currx - 0.5*(\c-1)*\d}
    \pgfmathsetmacro{\xminstart}{-\aw + \d}
    \pgfmathsetmacro{\xmaxstart}{\aw - \d - (\c-1)*\d}
    \pgfmathsetmacro{\xstart}{max(\xminstart, min(\xmaxstart, \xn))}

    % Randomly choose which child becomes the next current node
    \pgfmathrandominteger{\r}{1}{\c}

    % Draw all children and edges from current node
    \foreach \j in {1,...,\c} {
      \pgfmathsetmacro{\xj}{\xstart + (\j-1)*\d}

      % Normal edge
      \draw[edge] (\currx,\curry) -- (\xj,\ynext);

      % If this child is the chosen one, overdraw the edge highlighted
      \ifnum\j=\r
        \draw[hil] (\currx,\curry) -- (\xj,\ynext);
        \node[sel, minimum size=3pt] at (\xj,\ynext) {};
      \else
        \node[normal] at (\xj,\ynext) {};
      \fi
    }

    % Update current node to the chosen child (global defs so they persist across foreach iterations)
    \pgfmathparse{\xstart + (\r-1)*\d}\xdef\currx{\pgfmathresult}
    \pgfmathparse{\ynext}\xdef\curry{\pgfmathresult}
  }

\end{tikzpicture}
\caption{Zoom in on a single ``block'' on the meta-tree root-leaf path. Showing only the vertices of the caterpillar, with the ``spine'' highlighted.}
\label{fig:tree-b}
\end{subfigure}
\caption{Caterpillar construction for Stochastic Probing}
\label{fig:trees}
\end{figure*}

\textbf{Overview of Analysis.}
There is a simple greedy adaptive policy that obtains $\Omega(L)$ value (in expectation and w.h.p.): starting from the root, at each iteration probe all the children of the current vertex and proceed to one with nonzero value (such a child exists with constant probability; if none exist proceed to an arbitrary child). 

For the non-adaptive policy, each node on the main path (``spine of the caterpillar'') probed has only $1/L^2$ probability of having nonzero value. On any specific layer, the non-adaptive policy may also gain some value from the contribution of ``caterpillar legs'', however, the double error-correcting code construction guarantees that 
\begin{itemize}
    \item Inside the same block, those caterpillar legs contain at most a $1/L$-fraction of any other spine; and 
    \item When a caterpillar and a spine are in different blocks, they share at most $O(1/L^2)$-fraction of elements. Thus it can only get $O(1/L^2)$-fraction of the remaining value from any other layer. 
\end{itemize}

% \begin{itemize}
%     \item On any specific layer,  those caterpillar legs contain at most a $1/L$-fraction of any spine; and 
%     \item Once the non-adaptive policy covers more than $O(1/L^2)$-fraction of some spine with the legs of another caterpillar, its entire path to the root is fixed. Thus it can only get $O(1/L^2)$-fraction of the remaining value from other layers. 
% \end{itemize}
In sum, any non-adaptive policy is limited to at most $O(1/L)$-fraction on the best layer and $O(1/L^2)$-fraction on each of the other $L-1$ layers.

\section{Single-Log Algorithm for the Secretary Problem}  \label{sec:secretary}

We study the secretary problem for XOS functions. Formally, given a universe of elements $[n]:= \{1,\ldots,n\}$, a set function $f : 2^{[n]} \to \mathbb{R}_{\ge 0}$ is called an XOS function  if there exists a family $\calV$ of nonnegative value vectors (a.k.a. \emph{clauses}) in $\mathbb{R}_{\ge 0}^n$ such that for every $S\subseteq[n]$, 
\[
f(S) = \max_{v \in \calV} \sum_{i \in S} v_i.
\]
This class  generalizes linear and submodular functions, and is also known to $O(\log n)$-approximate any (monotone) subadditive set function \cite{dobzinski2007two}.

\begin{defn}[Secretary Problem]
We are given a downward-closed feasibility constraint $\mathcal{F}$ over elements $[n]$.  
There is an unknown XOS valuation function $f : 2^{[n]} \to \mathbb{R}_{\ge 0}$ and 
the elements arrive one by one in a uniformly random order.  At any point, the algorithm can observe values for all subsets of arrived elements (value oracle). On arrival of  element $i$, the algorithm must immediately and irrevocably decide whether to accept or reject $i$, while maintaining feasibility under $\mathcal{F}$.  
The objective is to maximize $f(S)$ for the finally accepted set $S$,  
and the benchmark is the offline optimum (hindsight) value $\max_{T \in \mathcal{F}} f(T)$.
\end{defn}

We will now design an $O(\log n)$-competitive algorithm for this problem.

\subsection{Preliminaries and Preprocessing}
 By w.h.p.\,we mean $1-1/n^c$ where the constant $c$ can be made arbitrarily large by increasing the constant in $O(\log n)$-competitive algorithm.

\paragraph{Converting to XOS Representation.}
Although we are given value oracle access to the XOS function $f$ on the arrived elements, it will be useful for our proofs to think of its XOS representation. Formally,  a vector $v \in \R^{n}_{\geq 0}$ is said to \emph{support}  set $S \subseteq [n]$ if for every subset $T\subseteq S$, we have $\sum_{i \in T} v_i \leq f(T)$. Thus, to obtain XOS representation of $f$ online, we add all supporting vectors on the set of already arrived elements as XOS clauses into the clause family $\calV$.

Observe that this means the clauses will expand as more elements arrive, and that the clauses are ``downward-closed'': if vector $a\in \R^{n}_{\geq 0}$ is a clause and vector  $b\in \R^{n}_{\geq 0}$ is coordinate-wise less than $a$, then vector $b$ is also a clause. Although this means that there are infinitely many clauses, it suffices to work with vector-coordinates that are powers of 2, which we discuss next.

\paragraph{Standard Preprocessing.}
W.l.o.g.\,we consider instances where: (i) all values in the XOS clauses $\calV$ are powers of $2$; (ii) all values in the XOS clauses are in the range $[1/n^2,1]$; and (iii) every individual element contributes at most $1/\omega(\log n )$-fraction of the optimal value (equivalently, of the clause achieving the optimal value); this implies that the top $\log(n)$ elements contribute $o(1)$-fraction of the total value.
%Also, wlog we only focus on value classes that have at least $|W_C| = \Omega(\log(n))$ elements.
To see why all these assumptions are w.l.o.g., consider the following preprocessing:
\begin{itemize}
    \item To get an instance with powers of $2$, we round down every element's contribution in each clause to the nearest power of $2$, which  decreases the total value of any set of elements by at most a factor of $2$.
   
    \item With probability $1/2$ we run the single-choice secretary algorithm. Thus, we can henceforth assume w.l.o.g.\,that 
    each element $1/\omega(\log n )$-fraction of the total value;  otherwise, the single-choice secretary algorithm already guarantees an $\Omega(1/\log(n))$-fraction. 
    
    \item If we don't run the single-choice algorithm, we use the first half of the stream as a {\em pre-sample}. With high probability, one of those top $\log(n)$ elements appears in this pre-sample and the remaining second half will  also have a constant fraction of the total optimal value. We can use the top value from this pre-sample $a_{*}$ to normalize every element's contribution in each clause:
    $$
    \hat{a}_i := 
    \begin{cases}
    0 & \text{ if }~ a_i < a_{*}/n^2 \\
    a_i/ a_{*} & \text{ if }~ a_i \in [a_{*}/n^2, a_{*}] \\
    1 & \text{ if }~ a_i > a_{*}
    \end{cases}.
    $$
    Note that the total value of elements rounded down to $0$ is negligible compared even to just $a_{*}$.
    Furthermore, w.h.p.\,we only round down to $1$ the top fewer-than-$\log(n)$ elements, so this truncation has negligible impact on the total value. 
    %\item The total contribution of all elements from value 
\end{itemize}
Henceforth, for ease of presentation, ignore these preprocessing steps and simply assume that we are given an instance in XOS representation satisfying our assumptions. This also implies that the optimal value of this instance is $\Omega(\log n)$, since we know that a single element can already get $a_{*}$ value and item (iii)  of preprocessing implies $a_{*}$ is less than $(1/\log n)$ fraction of the optimum.

\subsection{Setting up the Algorithm}

\paragraph{Tie-breaking.}
Even though we consider element values in each clause to be a power of $2$,  we will assume for convenience of notation that each value is perturbed by an infinitesimal amount. We will not account for the contribution of these perturbations, but they allow us to assume w.l.o.g. that optimum subsets are unique.

%Also, wlog we only focus on value classes that have at least $\Omega(\log(n))$ elements (the total value is at least $\omega(\log(n))$ - otherwise single-item secretary guarantees approximation factor; and the remaining value classes have negligible total contribution),
%\Aviad{TBD: formalize}
\paragraph{Lower-Sample, Current-Sample, and Real elements.}
Assuming\footnote{This is wlog. If $n$ is not a multiple of $3$, we can add one or two dummy elements with value $0$ in every clause.} $n$ is a multiple of $3$, we partition the stream into $3$ parts:
\begin{itemize}
    \item The first third $I_1 := \{1,\dots,n/3\}$ are the {\em Current-Sample};
    \item the middle\footnote{The order between Lower-Sample is not important.} third $I_2 := \{n/3+1,\dots,2n/3\}$ is the {\em Lower-Sample}; and
    \item the last third $I_3 := \{2n/3+1,\dots,n\}$ is the {\em Real}.
\end{itemize}
Intuitively, for each scale $C$, our algorithm will try to select a subset of Real elements with marginal contribution $C$ that is simultaneously compatible with (i) all previously selected elements with marginal contribution $>C$; (ii) a subset of Current-Sample elements with marginal contribution $C$; and (iii) a subset of Lower-Sample elements with marginal contribution $<C$.

Our analysis exploits the symmetry, due to random order arrival, between the three parts. Specifically, it will be useful to compare the $t$-th element in the Current-Sample with the $t$-th element in the Real (the order among Lower-Sample elements is not important). To make notation convenient, for element $t \in [n/3]$ in the Current-Sample, we use $\ot$ to denote the corresponding Real element $t+2n/3$.
We further define $I_1(t) := \{t+1,\dots,n/3\}$, and similarly $I_3(t) := \{t+2n/3+1,\dots,n\}$.

After the algorithm has already selected set $A$ with total value $\tau_A$, we will need two notions of ``optimal'' solutions: (i)~the first one maximizes total value at all scales $\leq C$, and (ii)~the second one only maximizes  total value (number of elements) at scale $C$ while respecting a total lower value constraint. We formally define both of these next.

\paragraph{(i) $\Opt^{\leq C}(I \mid  A,\tau_A)$ notation.}
Given sets $I$, $A$, scale $C$, and a threshold $\tau_A$, we consider the set $S$ and corresponding XOS-clause $v_{\cdot,j^*}$ that maximize $\sum_{t \in S} v_{t,j^*}$ subject to
\begin{enumerate}
    \item $S$ is feasible;
    \item $S \supseteq A$ and $\sum_{t \in A} v_{t,j^*} \ge \tau_A$ ; 
    %\item for all $t \in I_{\text{Current}}$, we have $v_{t,j^*} = C$; 
    \item for all $t \in I \cap S$, we have $v_{t,j^*} \leq C$.
\end{enumerate}
With slight abuse of notation we let $\Opt^{\leq C}(I \mid  A,\tau_A)$ denote both this optimum set and its value.

\paragraph{(ii) $\Opt^C(I_{\text{Current}} \mid  A,\tau_A \mid I_{\text{Lower}},\tau_{\text{Lower}})$ notation.}
Given sets $I_{\text{Current}},I_{\text{Lower}},A$, scale $C$, and thresholds $\tau_{\text{Lower}}, \tau_A$, we consider the (unique, due to tie-breaking) set $S \subseteq I_{\text{Current}}$ and corresponding XOS-clause $v_{\cdot,j^*}$
that maximize $\sum_{t \in S} v_{t,j^*}$ such that there exists subset $T \subseteq I_{\text{Lower}}$ satisfying
\begin{enumerate}
    \item $A \cup S\cup T$ is feasible;
    \item $\sum_{t \in A} v_{t,j^*} \ge \tau_A$ ; 
    \item for all $t \in S$, we have $v_{t,j^*} = C$; 
    \item for all $t \in T$, we have $v_{t,j^*} \leq  C/2$ and  $\sum_{t \in T} v_{t,j^*} \ge \tau_{\text{Lower}}$.
\end{enumerate}
Note that since all elements of $S$ have clause-value exactly 
$C$, the objective is really maximizing the cardinality of
$S$ (up to infinitesimal tie-breaking), while $T$ only serves as a lower-scale witness.
With slight abuse of notation we let $\Opt^C(I_{\text{Current}} \mid  A,\tau_A \mid I_{\text{Lower}},\tau_{\text{Lower}})$ denote both this optimum set and its value.

\paragraph{Labels of Real elements.}
We further partition the Real elements with {\em labels} that assign them to different scales.
Ideally, we want to think of  assigning to each Real element $\ot$ a uniformly random label $C(\ot)$ independently and uniformly at random. 
Intuitively, we will try to cover the contribution from value-$C$ elements with Real elements with label $C(\ot)$.

In order to respect the actual (uniformly random) order of arrival, we will assign labels to the {\em indices} of Real element. Whenever we want to access a Real element corresponding to some label and index $\ot$, we will reveal the next Real element in the actual order of arrival.

\subsection{Algorithm}
We begin with a description of a simpler, ideal algorithm that doesn't exactly respect the true arrival order (Algorithm~\ref{Alg:analysis}), and then show an equivalent implementable variant (Algorithm~\ref{Alg:secretary}). Our ideal algorithm proceeds in $O(\log(n))$ phases, one for each scale $C$. At the beginning of each phase for scale $C$, it initializes the phase by  computing a current offline optimal solution among sample elements (specifically $\Opt^{\leq C}(I_2 \mid  \Alg,\tau_{\Alg})$ as defined above) to set the reserve threshold for expected contribution from lower scales.

The main subroutine for each scale proceeds by incrementing the index $t$ for the Current-Sample and $\ot$ for the Real. For each real element $\ot$ it asks if (i) its label $C(\ot)$ matches the current scale; and (ii) it would be in $\Opt^C(\ot \cup I_1(t) \mid \Alg,\tau_{\Alg} \mid I_2, \tau_{<C})$, i.e.~the optimal set at scale $C$ among remaining Current-Sample elements $I_1(t)$, while ensuring compatibility with previously selected elements $\Alg$ and future lower-scale elements as predicted by $I_2$. If $\ot$ satisfies both conditions, it is added to $\Alg$ and we increase $\tau_{\Alg}$ by $C$.

\begin{algorithm}[t] 
\caption{Ideal Secretary algorithm}
\label{Alg:analysis}
\textbf{Input:} Secretary problem instance satisfying properties (i), (ii), and (iii) of standard preprocessing.\medskip
\begin{enumerate}
\item  $\Alg \leftarrow \emptyset$ and $\tau_{\Alg} \leftarrow 0$.
\item for each $C\in \{1,2^{-1},2^{-2}, \dots, \underbrace{2^{-\lceil 2\log(n) \rceil}}_{\approx 1/n^2}\}$: 
    \begin{enumerate}
        \item Compute $\Opt^{\leq C}(I_2 \mid  \Alg,\tau_{\Alg})$, and let $\tau_{<C}$ denote the total value from scales $\leq C/2$.
        %$\Opt^C(I_1 \mid \Alg,\tau_{\Alg} \mid I_2,0)$, and let $\tau_{<C}$ denote the total contributions of $I_2$-elements to $\Opt^C(I_1,I_2 \mid \Alg)$.
        \item For each $t \in I_1$ (in order), 
        {\color{blue} \begin{itemize}
            \item If $C(\ot) = C$ AND $\ot \in \Opt^C \big(\ot \cup I_1(t)\mid \Alg,\tau_{\Alg} \mid I_2, \tau_{<C} \big)$,
            %$\ot \in \Opt^C(\ot \cup I_1(t) \mid \Alg,\tau_{\Alg} \mid I_2, \tau_{<C})$,
            \begin{enumerate}
                \item $\Alg \leftarrow \Alg \cup \{\ot\}$.
                \item $\tau_{\Alg} \leftarrow \tau_{\Alg}+ C$
            \end{enumerate}
        \end{itemize}}
    \end{enumerate}    
\item Return $\Alg$

\end{enumerate}
\end{algorithm}

The ideal algorithm restarts the Real part of the stream for each phase, which doesn't respect the actual order of arrival. However, each Real element is only accessed in one, uniformly random phase (depending on its label). Thus, due to the random order of the stream, our ideal algorithm is equivalent to the implementable variant (Algorithm~\ref{Alg:secretary}) that pops the next Real element each time it sees an index corresponding to the current label. 

\begin{algorithm}[t] 
\caption{Implementable Secretary algorithm}
\label{Alg:secretary}
\textbf{Input:} Secretary problem instance satisfying properties (i), (ii), and (iii) of standard preprocessing.\medskip
\begin{enumerate}
\item  $\Alg \leftarrow \emptyset$ and $\tau_{\Alg} \leftarrow 0$.
\item $t_\text{Real} \leftarrow \overline{0} = 2n/3$ {\color{gray-comment} \# index for actual real elements}
\item for each $C\in \{1,2^{-1},2^{-2}, \dots, \underbrace{2^{-\lceil 2\log(n) \rceil}}_{\approx 1/n^2}\}$: 
    \begin{enumerate}
    \item Compute $\Opt^{\leq C}(I_2 \mid  \Alg,\tau_{\Alg})$, and let $\tau_{<C}$ denote the total value from scales $\leq C/2$.
        %\item Compute $\Opt^C(I_1 \mid \Alg,\tau_{\Alg} \mid I_2,0)$, and let $\tau_{<C}$ denote the total contributions of $I_2$-elements to $\Opt^C(I_1,I_2 \mid \Alg)$.
        \item For each $t \in I_1$ (in order), 
        {\color{blue} \begin{itemize}
            \item If $C(\ot) = C$
            \begin{enumerate}
                \item $t_\text{Real} \leftarrow t_\text{Real}+1$
                \item If $t_\text{Real} \in \Opt^C(t_\text{Real} \cup I_1(t) \mid \Alg,\tau_{\Alg} \mid I_2, \tau_{<C})$,
                \begin{enumerate}
                    \item $\Alg \leftarrow \Alg \cup \{t_\text{Real}\}$
                    \item $\tau_{\Alg} \leftarrow \tau_{\Alg}+ C$
                \end{enumerate}              
            \end{enumerate}
        \end{itemize}}
    \end{enumerate}    
\item Return $\Alg$

\end{enumerate}
\end{algorithm}

\begin{observation} \label{obs:algoEquiv}
    Under a uniformly random order of arrival, Algorithms~\ref{Alg:analysis} and~\ref{Alg:secretary} are equivalent. 
\end{observation}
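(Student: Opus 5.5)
We would prove the equivalence by a coupling argument: we exhibit a bijection between the randomness of the two algorithms under which they output the same set, and we check that this bijection preserves the joint distribution.

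First we fix the common randomness. This consists of the arrival order restricted to the Current-Sample $I_1$ and the Lower-Sample $I_2$, together with the labels $C(\ot)$ of the Real indices $\ot \in I_3$, which are drawn independently of the order. Algorithm~\ref{Alg:analysis} then consumes Real elements phase by phase. Within phase $C$, it scans the indices $\ot$ with $C(\ot)=C$ in increasing order of $t$. The resulting sequence of ``accessed'' Real indices is a fixed permutation $\pi$ of $I_3$, determined entirely by the labels: all label-$1$ indices in increasing order, then all label-$1/2$ indices, and so on. Algorithm~\ref{Alg:secretary} consumes Real elements in exactly the same sequence of access events. The only difference is that the $k$-th access event is served by the $k$-th actually arriving Real element, namely position $2n/3+k$, rather than by the element sitting at position $\pi(k)$.

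Second, conditioned on the labels and on the contents of $I_1,I_2$, the assignment of the remaining $n/3$ elements to Real positions is a uniformly random bijection. This holds because the order is uniform and the labels are independent of it. Composing a uniform bijection with the fixed permutation $\pi$ yields another uniform bijection. Hence the random sequence ``element served at the $k$-th access'' has the same distribution in both algorithms.

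Third, we check that each algorithm's decisions depend only on this sequence and the common randomness. Every test has the form: is the current element in $\Opt^C(\cdot\cup I_1(t)\mid \Alg,\tau_{\Alg}\mid I_2,\tau_{<C})$? This involves only the current element, $I_1(t)$, $I_2$, the previously selected set, and the thresholds. All of these are functions of earlier access events, the samples, and the labels, and none depends on the positional index of the Real element. Each Real element is accessed at most once, since each index carries one label and is scanned in one phase. The outputs therefore coincide under the coupling, and the two output distributions are equal.

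The main subtlety is to make sure that labels are chosen independently of the arrival order, and that an access event in Algorithm~\ref{Alg:secretary} never conditions on the identity of not-yet-revealed Real elements. Once these two points are stated carefully, the rest is the uniform-bijection fact above.
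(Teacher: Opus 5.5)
Your proposal is correct and follows the same reasoning as the paper. Each Real index is scanned in exactly one phase, determined by its label independently of the arrival order, so under a uniformly random order the element served at each access can be taken to be the next arriving Real element; your coupling through the label-determined permutation $\pi$ of access events is a careful formalization of the paper's one-line justification.
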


Throughout the analysis, we consider the simpler, ideal algorithm.

\subsection{Analysis}

For $t \in [n/3]$ and scale $C$, we use $\Alg^{(C,t)},\tau_{\Alg}^{(C,t)}$ to denote the value of these parameters after processing $\ot$ in phase-$C$ of the algorithm. We abuse notation and use $(C,0)$ for the beginning of the phase.

Let $\Opt_C$ denote the set of $C$-value elements in  the optimal set at the beginning of phase $C$ of the algorithm, i.e., in $\Opt^{\leq C}(I_2 \mid \Alg,\tau_{\Alg})$. 
%$\Opt^C(I_1 \mid \Alg^{(C,0)},\tau_{\Alg}^{(C,0)} \mid I_2,0)$ denote the number of $C$-value elements (aka $I_1$-elements) in the optimal set at the beginning of phase $C$ of the algorithm. 
Let $\Alg_C :=  \Alg^{(C,n/3)} \setminus  \Alg^{(C,0)}   $ denote set of elements added to $\Alg$ during the $C$-phase. Our goal is to show that, as long as $|\Opt_C|$ isn't negligible, then w.h.p.\ we have $|\Alg_C| = \Omega(|\Opt_C|/\log(n))$.

\begin{lemma}\label{lemma:mainXOS}
    If $|\Opt_C| > 100\log(n)\cdot (|\Alg^{(C,0)}|+1)$, then whp $|\Alg_C| = \Omega(|\Opt_C|/\log(n))$.
\end{lemma}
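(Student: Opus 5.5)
We follow the binary-case argument of \cite{Rubinstein16} inside phase $C$, using a potential $\Phi(t) := |\Opt^C(I_1(t) \mid \Alg^{(C,t)},\tau_{\Alg}^{(C,t)} \mid I_2,\tau_{<C})|$, the size of the best scale-$C$ Current-Sample set still compatible with everything chosen so far. At $t=0$ we lower bound $\Phi(0)$ by roughly $|\Opt_C|/3$. The set $\Opt^{\leq C}(I_2\mid \Alg,\tau_{\Alg})$ splits its scale-$C$ part $\Opt_C$ among $I_1,I_2,I_3$ uniformly at random. Since $|\Opt_C|\gg \log n$, w.h.p.\ about a third of it lands in $I_1$; a symmetric argument applies to the lower-scale witness, which is why $\tau_{<C}$ is computed from $I_2$. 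Together these give a feasible witness for $\Phi(0)$. Conditioning on the prior $\Alg^{(C,0)}$ is handled by a union bound over all possible $\Alg^{(C,0)}$ of size $k:=|\Alg^{(C,0)}|$. There are at most $n^{k}$ such sets, and each concentration failure has probability $2^{-\Omega(|\Opt_C|)}$. The hypothesis $|\Opt_C|>100\log n\,(k+1)$ makes $n^{k}\cdot 2^{-\Omega(|\Opt_C|)}$ negligible. This is exactly the Big-Decisions-First point: earlier, higher-scale choices are few relative to $|\Opt_C|$.

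Next we analyze each step $t$ with a backward, random-order symmetry argument. Conditioned on the unordered set $\{t\}\cup I_1(t)\cup\{\ot\}$ and on the history, the element $t$ and the element $\ot$ are exchangeable. Whenever $\ot$ lies in $\Opt^C(\ot\cup I_1(t)\mid\cdots)$, adding it to $\Alg$ keeps the optimum compatible, so $\Phi$ drops by at most one. If instead $\ot$ is not in that optimum, then by exchangeability we can compare with $t\in I_1$ to get the following. The drop $\Phi(t-1)-\Phi(t)$ is, in expectation, at most about $\Phi(t-1)/(n/3-t)$ from losing $t$, plus the cost of the selection. Meanwhile the probability of selecting at step $t$ is at least $\frac{1}{O(\log n)}\cdot \Phi(t-1)/(n/3-t)$, the factor coming from the label matching with probability $1/O(\log n)$. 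The main difficulty is that the selection condition refers to the \emph{real} element $\ot$ while $\Phi$ is defined on sample elements. Here I would use the exchangeability of $\ot$ with a uniformly random element of $\{t\}\cup I_1(t)$: the chance that $\ot$ is in the optimum over $\ot\cup I_1(t)$ equals $\Phi$ divided by the remaining count, up to tie-breaking.

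Finally we run this for only the first $n/(3\cdot O(\log n))$-ish window, or equivalently until $|\Alg_C|$ reaches $c|\Opt_C|/\log n$. Over that stretch $\Phi$ stays $\Omega(|\Opt_C|)$, since the losses from dropping $t$ are at most a constant fraction and each selection costs at most $1$. Hence the expected number of selections is $\sum_t \frac{1}{O(\log n)}\Theta(|\Opt_C|/n)=\Omega(|\Opt_C|/\log n)$. To upgrade from expectation to w.h.p.\ we apply Freedman's martingale inequality to the selection indicators, whose conditional success probabilities are bounded below, together with $|\Opt_C|\ge 100\log n$.

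The expected obstacle is the dependence created by $\Alg$ adapting to the sample, which can cause over-fitting. To control it, I would again union bound over all $n^{|\Alg^{(C,0)}|+|\Alg_C|}$ possible selected sets, keeping $|\Alg_C|\le |\Opt_C|/\log n$ so that the $2^{-\Omega(|\Opt_C|)}$ tail still dominates.
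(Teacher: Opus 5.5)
Your proposal has the right ingredients (the $t$ vs.\ $\ot$ symmetry of \cite{Rubinstein16}, a union bound over small selected sets, and the role of the hypothesis), but two steps fail as written.

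\textbf{The bound on $\Phi(0)$.} This step rests on a misreading. $\Opt_C$ is the scale-$C$ part of $\Opt^{\leq C}(I_2\mid\Alg,\tau_{\Alg})$, an optimum computed on the Lower-Sample. So $\Opt_C\subseteq I_2$ by definition, and it does not split over $I_1,I_2,I_3$. What you actually need is the transfer statement
$\Opt^C(I_1\mid\Alg,\tau_{\Alg}\mid I_2,\tau_{<C})=\Omega\big(\Opt^C(I_2\mid\Alg,\tau_{\Alg}\mid I_2,\tau_{<C})\big)=\Omega(C\,|\Opt_C|)$.
A ``symmetric argument for the lower-scale witness'' does not give this, for two reasons. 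The witness must lie in $I_2$ and reach $\tau_{<C}$, which is the \emph{entire} lower-scale value of the $I_2$-optimum; a random share of some global witness will not meet that threshold. Also, conditioning on which witness is available inside $I_2$ biases where the scale-$C$ elements land.

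The paper proves this as a separate step:
\begin{enumerate}
\item Fix $\widehat{\Alg},\widehat{\tau_{\Alg}},\widehat{\tau_{<C}}$ and switch to an i.i.d.\ $I_1/I_2$ split.
\item Take the best scale-$C$ set $\widehat{\Opt_C}$ over $I_1\cup I_2$ whose witness lies in $I_2$. It dominates the $I_2$-version, and $I_1\cap\widehat{\Opt_C}$ with the same witness certifies the $I_1$-version.
\item Condition on the first available clause--witness pair. Use FKG (unavailability of earlier pairs is decreasing in the $I_2$-coordinates) to get a submartingale.
\item Conclude $|I_1\cap\widehat{\Opt_C}|\ge|\widehat{\Opt_C}|/3$ except with probability $2^{-\Omega(|\widehat{\Opt_C}|)}$, which is small enough for the union bound.
\end{enumerate}

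\textbf{The within-phase rate argument.} This uses an exchangeability that does not hold. Conditioned on the history, $\ot$ is not exchangeable with a uniformly random element of $\{t\}\cup I_1(t)$, nor is $t$ uniform within $I_1(t-1)$. The algorithm's past decisions depend on exactly which elements sit at positions $t,\dots,n/3$ rather than at $\ot$. In this phase it evaluated optima over the unordered sets $I_1(s)$ for $s<t$, and every earlier phase scanned all of $I_1$.

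Fixing $\widehat{A}$ and the thresholds removes this dependence. But then the tail must beat about $n^{|\Opt_C|/(10\log n)}=2^{\Theta(|\Opt_C|)}$ choices. Freedman applied to your selection indicators, whose total mean is only $\Theta(|\Opt_C|/\log n)$, gives only $e^{-\Omega(|\Opt_C|/\log n)}$. The $2^{-\Omega(|\Opt_C|)}$ tail you invoke at the end is never established for the quantity you concentrate.

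The missing idea is to decouple the order randomness from the label randomness.
\begin{enumerate}
\item For fixed $\widehat{A}$, let $G(t)$ be the event that $\ot$ is in the optimum over $\ot\cup I_1(t)$, and $B(t)$ the event that $t$ is in the optimum over $t\cup I_1(t)$. The potential drops only on bad events, by at most one each, so $C\sum_t B(t)\ge\Opt^C(I_1\mid\cdots)$.
\item Reveal the pairs $\{t,\ot\}$ in \emph{reverse} order. Then $\sum_t(G(t)-B(t))$ is a martingale, giving $\sum_t G(t)\ge\sum_t B(t)/2$ except with probability $e^{-\Omega(|\Opt_C|)}$. This survives the union bound.
\item Only then, with the order fixed, expose the labels forward. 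Each good event becomes a selection with probability $1/\log n$. The weaker $e^{-\Omega(|\Opt_C|/\log n)}$ tail now suffices because no union bound is needed.
\end{enumerate}

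Relatedly, your window of $n/(3\cdot O(\log n))$ steps would yield only $\Theta(|\Opt_C|/\log^2 n)$ expected selections. The labels already cost the $\log n$ factor, so the count has to run over the whole phase.
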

We defer the proof of this lemma, which is the heart of the proof, to  \Cref{sec:lemmaSingleScale}. 
We also need the following easy auxiliary claim on the concentration of the optimum among sample elements to  prove our main theorem.

\begin{claim}%[Value of $\Opt(\text{Sample})$ concentrates]
\label{claim:opt(sample)}
W.h.p., $\Opt([n/3]) \ge  \Opt([n])/4$.
\end{claim}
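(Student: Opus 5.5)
Here $\Opt([n/3])$ is the offline optimum restricted to the first third of the stream: the best feasible subset of $I_1$, valued by $f$. The plan is to fix the offline optimal set $S^* \in \mF$ with its maximizing clause $v$, so that $\Opt([n]) = \sum_{i\in S^*} v_i$. By downward-closedness, $S^*\cap[n/3]$ is feasible, and since $v$ is a clause, $\Opt([n/3]) \ge f(S^* \cap [n/3]) \ge \sum_{i \in S^*\cap [n/3]} v_i$. It therefore suffices to show that the random sum $X := \sum_{i\in S^*} v_i \mathbf{1}[i \text{ arrives in the first third}]$ is at least $\Opt([n])/4$ with high probability. Note that $S^*$ and $v$ are fixed before the random order is drawn, so there is no adaptivity issue.

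The expectation is easy: each element lands in the first third with probability exactly $1/3$, so $\E[X] = \Opt([n])/3$. Under a uniformly random permutation, the indicators are those of sampling without replacement, i.e.\ a uniformly random subset of size $n/3$. For such sums, Hoeffding/Chernoff bounds hold just as for independent sampling (negative association, or Hoeffding's comparison with sampling with replacement).

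For the concentration step, normalize by $M := \max_i v_i$. By preprocessing item (iii), $M \le \Opt([n])/\omega(\log n)$, so $\E[X]/M = \omega(\log n)$. A multiplicative Chernoff bound with deviation $\delta = 1/4$ (to go from $1/3$ down to $1/4$) then gives failure probability $\exp(-\Omega(\E[X]/M)) = \exp(-\omega(\log n)) = n^{-\omega(1)}$, which is high probability.

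The only step needing care is justifying the Chernoff bound for the without-replacement (random-permutation) sampling and for weighted summands in $[0,M]$. I would cite negative association of uniform sampling, or alternatively generate the order by i.i.d.\ uniform arrival times, so that each element is independently in $[n/3]$ up to a negligible concentration of $|[n/3]|$ itself. One should also make sure the small-element assumption (iii) refers to the same instance; the claim is applied to the preprocessed instance, where it holds.
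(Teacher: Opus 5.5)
Your proposal is correct and follows essentially the same route as the paper: fix the optimal set and its maximizing clause $v$, lower bound $\Opt([n/3])$ by the restricted sum $\sum_{i \in S^*\cap[n/3]} v_i$, and apply a Hoeffding/Chernoff tail bound using preprocessing item (iii). The differences are cosmetic. The paper applies Hoeffding with variance proxy $\sum_t v_t^2$ and then bounds $\Opt([n])^2/\sum_t v_t^2 \ge \Opt([n])/\max_t v_t = \omega(\log n)$, whereas you normalize by $\max_i v_i$ directly in a multiplicative Chernoff bound; you also explicitly justify sampling without replacement, which the paper glosses over.
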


\begin{proof}
    Let $v$ denote the coefficients of the XOS-maximizing clause for $\Opt([n])$; we denote $\Opt^{[n/3]}([n]) = \sum_{t \in \Opt([n]) \cap [n/3]} v_t $.  Observe that $\Opt([n/3]) \ge \Opt^{[n/3]}([n])$. In other words, we have a lower bound of $\Opt^{[n/3]}([n])$, aka the contribution of the Sample elements to the global optimum. We can bound the probability of a large deviation from the expected $\Opt([n])/3$ using, e.g., Hoeffding's Inequality: 
    \begin{align*}
        \Pr\left[\Opt^{[n/3]}([n])  <  \Opt([n])/4\right] & < \exp\left(-\Omega \left(\frac{\Opt([n])^2}{\sum_{t \in \Opt([n])}v_t^2} \right)\right) \\
        &= \exp(-\omega(\log n )) \enspace,
    \end{align*}
    where $ \frac{\Opt([n])^2}{\sum_{t \in \Opt([n])}v_t^2} \geq \frac{\sum_{t \in \Opt([n])}v_t}{\max_{t \in \Opt([n])}v_t}  = \omega(\log n )$ follows by item (iii) of standard preprocessing.
\end{proof}

Now we can prove our main result.

\begin{theorem} \label{thm:mainSecretFormal}
 \Cref{Alg:secretary} is  $O(\log n )$-competitive for the secretary problem with an XOS objective and general downward-closed feasibility constraints.
\end{theorem}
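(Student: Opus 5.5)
By Observation~\ref{obs:algoEquiv} it suffices to analyze Algorithm~\ref{Alg:analysis}. We may also work throughout with the preprocessed instance, whose value is within a constant factor of the original. The plan is to combine Lemma~\ref{lemma:mainXOS}, applied scale by scale, with Claim~\ref{claim:opt(sample)}, and to charge the value lost at each phase either to $\Alg$ or to a negligible term.

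\textbf{Step 1: a potential.} Define $\Phi_C := \Opt^{\leq C}(I_2 \mid \Alg^{(C,0)},\tau_{\Alg}^{(C,0)})$, the quantity computed at the start of phase $C$. At the first scale $C=1$ we have $\Alg=\emptyset$, so $\Phi_1$ is the optimum over the sample elements, restricted in the $I_2$ part. Using the symmetry between $I_1$ and $I_2$ together with the proof of Claim~\ref{claim:opt(sample)} (Hoeffding's inequality plus preprocessing item (iii)), I would show that w.h.p.\ $\Phi_1 \ge \Omega(\Opt([n]))$. Concretely, the global optimum's clause restricted to $I_2$ keeps about a third of the value, and that witness is feasible for $\Opt^{\leq 1}(I_2\mid\emptyset,0)$.

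\textbf{Step 2: the potential degrades only by what $\Alg$ earns.} I would prove the following inequality:
\[
\Phi_{C/2} \;\ge\; \Phi_C - C\,|\Opt_C| - (\text{loss}),
\]
where $\Phi_C = \tau_{\Alg}^{(C,0)} + C|\Opt_C| + \tau_{<C}$. The reason is that the clause witnessing $\Opt^C(\cdots\mid I_2,\tau_{<C})$ when each element is accepted keeps the lower-scale $I_2$ part $T$ with value $\ge \tau_{<C}$ and keeps $\tau_{\Alg}$ consistent. So at the end of phase $C$, the set $\Alg^{(C/2,0)} \cup T$ is a feasible witness for $\Phi_{C/2}$. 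Its value is at least $\tau_{\Alg}^{(C/2,0)} + \tau_{<C}$, which equals $\tau_{\Alg}^{(C,0)} + C|\Alg_C| + \tau_{<C}$.

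The delicate point is that the last witness $T$ must come from the final acceptance of phase $C$; if nothing was accepted, I would use the initial optimum. Summing gives
\[
\sum_C C|\Opt_C| \;\ge\; \Phi_1 - \text{(value at scales below } 1/n^2) - \text{(total of the } \tau_{\Alg} \text{ terms)}.
\]
Since $\tau_{\Alg}$ at the end is itself $\Alg$'s value, telescoping yields
\[
\Phi_1 \;\le\; \sum_C C\,|\Opt_C| + O(\mathrm{val}(\Alg)) + o(\Opt).
\]

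\textbf{Step 3: apply Lemma~\ref{lemma:mainXOS} per scale.} I split the scales into two kinds.
\begin{itemize}[leftmargin=*]
\item \emph{Good scales}, where $|\Opt_C| > 100\log n\,(|\Alg^{(C,0)}|+1)$. Lemma~\ref{lemma:mainXOS}, with a union bound over the $O(\log n)$ scales, gives $C|\Alg_C| = \Omega(C|\Opt_C|/\log n)$.
\item \emph{Bad scales}, where $C|\Opt_C| \le 100\log n\cdot C(|\Alg^{(C,0)}|+1)$. Every element of $\Alg^{(C,0)}$ has value $\ge 2C$, so $C|\Alg^{(C,0)}| \le \mathrm{val}(\Alg)$. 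Summed over scales, the contribution of these terms is geometric, giving at most $O(\log n)(\mathrm{val}(\Alg) + \sum_C C)$. The $\sum_C C \le 2$ part is $o(\Opt)$, because $\Opt = \Omega(\log n)$ after normalization; strictly, $\omega(\log n)$ relative to $a_*$.
\end{itemize}
The geometric bound needs care. Charging element $e\in\Alg$ of value $v_e$ across all smaller scales $C<v_e$ gives $\sum_{C<v_e} C \le v_e$, so the total is fine.

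\textbf{Conclusion and main obstacle.} Combining the two cases, $\Opt([n]) = O(\Phi_1) \le O(\log n)\,\mathrm{val}(\Alg) + o(\Opt)$, which is the claimed competitive ratio. I expect Step 2 to be the main obstacle: showing rigorously that the lower-scale reserve $\tau_{<C}$ is preserved as a witness into the next phase's $\Opt^{\leq C/2}$, and that $\tau_{\Alg}$ remains attainable by the same clause. This hinges on the definitions of $\Opt^C(\cdot\mid\cdot\mid I_2,\tau_{<C})$ requiring the same clause for $A$, $S$ and $T$. I would try to carry that fixed clause $v_{\cdot,j^*}$ from the last acceptance through to the next phase.
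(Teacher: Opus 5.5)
Your proposal is correct and follows the paper's argument, just unrolled. The paper proves by induction the invariant $\tau_{\Alg}+\frac{1}{1000\log n}\Opt^{\le C}(I_2\mid\Alg,\tau_{\Alg})=\Omega(\Opt([n])/\log n)$, using Claim~\ref{claim:opt(sample)} for the base case, the protected reserve $\tau_{<C}$ to bound each phase's drop by $C|\Opt_C|$, and the same split into Lemma~\ref{lemma:mainXOS} versus charging $200\log n\cdot C$ to each selected element; your telescoped potential is this argument with the witness-clause carrying and the $+1$ term (absorbed via $\Opt=\omega(\log n)$) made more explicit. One small slip, which the paper's ``drop is at most $C|\Opt_C|$'' shares: the identity $\Phi_C=\tau_{\Alg}^{(C,0)}+C|\Opt_C|+\tau_{<C}$ only holds as $\ge$, since the maximizing clause may give $\Alg^{(C,0)}$ more than $\tau_{\Alg}^{(C,0)}$, but this excess is at most $f(\Alg)$ per scale and so adds only $O(\log n)\,f(\Alg)$ to your telescoped bound, which is harmless.
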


\begin{proof}
    By \Cref{obs:algoEquiv}, it suffices to analyze \Cref{Alg:analysis}.
    
    For \Cref{Alg:analysis}, we will show the invariant that  w.h.p. at the start of each phase $C$,
    \begin{align}\label{eq:invariant}
     \tau_{\Alg}  + \frac{1}{1000 \log(n)} \Opt^{\leq C}(I_2 \mid  \Alg,\tau_{\Alg})  =  \Omega\Big(\frac{\Opt([n])}{\log(n)} \Big).    
    \end{align}
    At the beginning of the algorithm (i.e., $C=1$), we have $\tau_{\Alg} = 0$ and $ \Opt^{\leq C}(I_2 \mid  \Alg,\tau_{\Alg}) = \Opt([n/3])$, so the invariant holds w.h.p.\,by \Cref{claim:opt(sample)}. We will prove below by induction that it holds for every $C$. 
    
    To see why this invariant suffices, observe that 
    at the end of the algorithm (i.e., $C=1/(2n^2)$) the second term $\Opt^{\leq C}(I_2 \mid  \Alg,\tau_{\Alg})  = 0$ as there are no elements of smaller value in the clauses by item (ii) of preprocessing. Since, $f(\Alg) \geq \tau_{\Alg}$,  the theorem follows.

    \paragraph{Proving the Invariant \eqref{eq:invariant}.} Given that the invariant holds at value scale $C$ (starting with $C=1$), consider changing the scale from $C$ to  $C/2$, i.e., going from beginning of phase $C$ to beginning of phase $C/2$. 
    Since the lower bound constraint $\tau_{<C}$ ensures that the total value from values $<C$ from $I_2$ is protected as $\Alg$ changes during phase $C$, the drop in $\Opt^{\leq C}(I_2 \mid  \Alg,\tau_{\Alg})$ on the l.h.s.\ is at most $C \cdot |\Opt_C|$. Consider two cases depending on the size of $\Opt_C$:

    \begin{enumerate}
        \item    
    If $|\Opt_C| < 100 \log(n) \cdot (|\Alg^{(C,0)}|+1)$ at the beginning of phase $C$, we can simply discard the contribution of these  elements: For every value-$C$ element selected by our algorithm, we are discarding potential contributions from up to $100 \log(n)$ elements from each smaller value, for a total value of up to $200\log(n)\cdot C$ (as values decrease in powers of $2$); even if we had successfully collected $1/\log(n)$-fraction of this value, we would only increase the total value of our algorithm by a constant factor.  
    
    \item Otherwise, i.e.~when $|\Opt_C| \geq 100 \log(n) \cdot (|\Alg^{(C,0)}|+1)$  at the beginning of phase $C$, \Cref{lemma:mainXOS} guarantees that w.h.p.\,we collect $|\Alg_C| =    \Omega(|\Opt_C|/\log (n))$ elements of value-$C$ during the $C$-phase, each of whom increase $\tau_{\Alg}$ by $C$ in the first term, thereby maintaining the invariant. 
    \qedhere
    \end{enumerate}
\end{proof}

\subsection{Proof of the main lemma (\Cref{lemma:mainXOS})} \label{sec:lemmaSingleScale}
We want to prove $|\Alg_C|=  |\Alg^{(C,n/3)} \setminus  \Alg^{(C,0)}| = \Omega(|\Opt_C|/\log n) $, where $\Opt_C$ denotes the $C$-value elements in  the optimal set at the beginning of phase $C$ (i.e., in $\Opt^{\leq C}(I_2 \mid \Alg,\tau_{\Alg})$).   Thus, at the beginning of phase $C$, we have $C \cdot \Opt_C = \Opt^C \big(I_2 \mid \Alg,\tau_{\Alg} \mid I_2, \tau_{<C} \big)$

We prove this lemma in two steps:
\begin{enumerate}
    \item First, we prove that during any fixed phase $C$, the number of elements  $|\Alg_C|$ selected by the algorithm satisfies,   w.h.p.,  
    \begin{align} \label{eq:singleScale}
        |\Alg_C| = \Omega\Big(\Opt^C \big(I_1 \mid \Alg,\tau_{\Alg} \mid I_2, \tau_{<C} \big)\Big) \cdot 1/(C\log n).
    \end{align}
    The proof of this lemma relies on defining good and bad events, similar to Rubinstein \cite{Rubinstein16}. The reason we lose $\log n$ factor is because only $1/\log n$ fraction of times are labeled $C$.
    
    \item Second, we prove a  ``transfer lemma'', which  at the change of phase (say from $C$ to $C/2$) allows us to transfer protected value (due to $\tau_{<C}$) at scale $C/2$ in $I_2$ to scale $C/2$ in $I_1$. Formally, we show that, w.h.p.,
    \begin{align}\label{eq:transferLemma}
        \Opt^C \big(I_1 \mid \Alg,\tau_{\Alg} \mid I_2, \tau_{<C} \big)  = \Omega\Big(\Opt^C \big(I_2 \mid \Alg,\tau_{\Alg} \mid I_2, \tau_{<C} \big) \Big) .
    \end{align} 
   
\end{enumerate}

\subsubsection{Step 1: Analyzing a single scale $C$}
\begin{proof}[Proof of \Cref{eq:singleScale}]
 Since the inequality is difficult to prove directly  w.r.t.\ $\Alg$, $\tau_{\Alg}$, and $\tau_{<C}$, we first prove that the inequality fails with an exponentially small probability for any given set $A$ and thresholds $\tau_A$, $\tau_{<C}$, and then we take a union bound over every possible choice of $A, \tau_A, \tau_{<C}$.

    Consider some fixed choice of thresholds $\widehat{\tau_{<C}},\widehat{\tau_{A}}$ and set $\widehat{A}$.
    Following~\cite{Rubinstein16}, we define the following good and bad events for each $t$:
    \begin{itemize}
        \item Good event $G(t) = G(t,\widehat{\tau_{<C}},\widehat{\tau_{A}},\widehat{A})$: $\ot \in \Opt^C(\ot \cup I_1(t) \mid \widehat{A},\widehat{\tau_{A}} \mid I_2, \widehat{\tau_{<C}})$.
        \item Bad event $B(t) = B(t,\widehat{\tau_{<C}},\widehat{\tau_{A}},\widehat{A})$: $t \in \Opt^C(t \cup I_1(t) \mid \widehat{A},\widehat{\tau_{A}} \mid I_2, \widehat{\tau_{<C}})$
    \end{itemize}
    The number of $I_1(t)$-elements in the set achieving $\Opt^C(I_1(t) \mid \Alg,\tau_{\Alg} \mid I_2, \tau_{<C})$ decreases by at most $1$ with each bad event (this is the reason we use $\Opt^C$ within a phase since it maximizes the value at scale $C$, instead of total value as in $\Opt^{\leq C}$), and it doesn't change at all otherwise. Since at the end of the phase we have no $I_1(t)$-elements left and each bad event decreases value by at most $C$, we have $C\cdot \sum_t B(t) \ge \Opt^C(I_1(t) \mid \Alg,\tau_{\Alg} \mid I_2, \tau_{<C})$. Our goal now is to show that, w.h.p., the number of good events can be lower bounded using the number of bad events (irrespective of the labels). We later show that $|\Alg_C|$ can be lower bounded in terms of the number of good events, where we will use the randomness in labels.

    \paragraph{Many good events.}
    Events $G(t)$ and $B(t)$ are equally likely by symmetry between $t$ and $\ot$. Furthermore, they remain equally likely even if we condition on any suffix of $I_1$ and $\overline{I_1}$. Therefore, if we reveal the identity of $t$ vs $\ot$ in reverse random order, $\sum_t \big(G(t) - B(t)\big)$ forms a martingale. 
    To bound the deviation of this martingale, we will use the following martingale version of Bennett's inequality%
\footnote{The main advantage over the more popular Azuma's inequality is that
this theorem depends on the variance rather than an absolute bound
on the values each variable can take; see also~\cite{Rubinstein16}.}.
\begin{thm}[e.g., \cite{Freedman1975}]
\label{thm:martingale-concentration} Let
$\left(\Omega,{\cal H},\mu\right)$ be a probability triple, with
${\cal H}_{0}\subseteq{\cal H}_{1}\subseteq\dots$ an increasing sequence
of sub-$\sigma$-fields of ${\cal H}$. For each $k=1,2,\dots$ let
$X_{k}$ be an ${\cal H}_{k}$-measurable random variable satisfying
$\left|X_{k}\right|\le1$ and $\E\left[X_{k}\mid{\cal H}_{k-1}\right]=0$.
Let $S_{n}\triangleq\sum_{k=1}^{n}X_{k}$ and $T_{n}\triangleq\sum_{k=1}^{n}\E\left[X_{k}^{2}\mid{\cal H}_{k-1}\right]$.
Then, for every $a,b>0$,
\[
\Pr\left[\left(S_{n}\geq a\right)\wedge\left(T_{n}\leq b\right)\right] ~\leq~ \exp\left(\frac{-a^{2}}{2\left(a+b\right)}\right).
\]
\end{thm}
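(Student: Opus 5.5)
The plan is the standard exponential-supermartingale (Chernoff–Bennett) argument, applied to the martingale $S_k$ with the predictable quadratic variation $T_k$ in place of a deterministic variance. Fix $\lambda>0$ and set $g(\lambda) := e^{\lambda}-1-\lambda \ge 0$.

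\textbf{Step 1: a one-step bound on the conditional moment generating function.} The function $\phi(x) := (e^{\lambda x}-1-\lambda x)/x^2$, extended by $\phi(0)=\lambda^2/2$, is nondecreasing in $x$. To see this, write $\phi(x)=\lambda^2\sum_{j\ge 2}(\lambda x)^{j-2}/j!$, which is clearly nondecreasing for $x\ge 0$; for $x<0$ it follows from a short derivative check. Since $|X_k|\le 1$, we get $\phi(X_k)\le\phi(1)=g(\lambda)$, that is,
\[
e^{\lambda X_k} \le 1 + \lambda X_k + g(\lambda) X_k^2 .
\]
Taking conditional expectations and using $\E[X_k\mid\mathcal H_{k-1}]=0$ together with $1+y\le e^y$ gives
\[
\E\big[e^{\lambda X_k}\mid\mathcal H_{k-1}\big] \le \exp\big(g(\lambda)\,\E[X_k^2\mid\mathcal H_{k-1}]\big).
\]

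\textbf{Step 2: the supermartingale.} Define $M_k := \exp\big(\lambda S_k - g(\lambda) T_k\big)$, with $M_0=1$. Because $T_k - T_{k-1} = \E[X_k^2\mid\mathcal H_{k-1}]$ is $\mathcal H_{k-1}$-measurable, Step 1 gives $\E[M_k\mid\mathcal H_{k-1}] \le M_{k-1}$. Hence $M_k$ is a nonnegative supermartingale and $\E[M_n]\le 1$.

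\textbf{Step 3: Markov's inequality.} On the event $\{S_n\ge a\}\wedge\{T_n\le b\}$ we have $M_n \ge \exp(\lambda a - g(\lambda) b)$, since $g(\lambda)\ge 0$. Markov's inequality then bounds the probability of this event by $\exp\big(-(\lambda a - g(\lambda) b)\big)$.

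\textbf{Step 4: optimizing $\lambda$ (the main, if routine, obstacle).} The only real work is choosing $\lambda$ and checking an elementary inequality. Take $\lambda=\log(1+a/b)$. This yields the bound $\exp\big(-b\,h(a/b)\big)$, where $h(u)=(1+u)\log(1+u)-u$. We then use the classical estimate
\[
h(u)\ge \frac{u^2}{2(1+u/3)}\qquad\text{for } u\ge 0 .
\]
To prove it, compare values and first two derivatives at $u=0$, or observe that $h''(u)=1/(1+u)$ dominates the second derivative of the right-hand side. This gives
\[
b\,h(a/b)\ \ge\ \frac{a^2}{2(b+a/3)}\ \ge\ \frac{a^2}{2(a+b)},
\]
which is exactly the claimed bound.
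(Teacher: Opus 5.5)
Your proof is correct. The paper gives no proof of this statement and simply cites \cite{Freedman1975}; your exponential-supermartingale argument, with $M_k=\exp\bigl(\lambda S_k-(e^{\lambda}-1-\lambda)T_k\bigr)$ and $\lambda=\log(1+a/b)$, is essentially Freedman's own proof, since his bound $(b/(a+b))^{a+b}e^{a}$ equals your $\exp(-b\,h(a/b))$. Because the statement fixes $n$, you do not need his optional-stopping step. Every step checks out, including the second-derivative comparison $1/(1+u)\ge 27/(3+u)^3$. You in fact obtain the stronger bound $\exp\bigl(-a^2/(2(b+a/3))\bigr)$ using only the one-sided condition $X_k\le 1$.
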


For each $t$, the variance of the martingale is $(G(t) - B(t))^2 \le G(t) + B(t)$, therefore the total variance $T_n$ is upper bounded by $\sum_t B(t)$.
Therefore, if we set $a = \sum_t B(t) /2$ and $b = \sum_t\big(G(t) + B(t)\big)$, we have that $\sum_t G(t) \ge \sum_t B(t) /2$ except with probability  
\begin{align}\label{eq:G=B}
    \exp \left(- \frac{\big(\sum_t B(t) \big)^2}{2\sum_t \big(G(t) + B(t)\big)}\right) ~\le~ \exp(-\Opt_C/4).     
\end{align}

Now, we can take a union bound over all choices of $a,b,\widehat{\tau_{<C}},\widehat{\tau_{A}},\widehat{A}$. In particular the number of choices in the union bound is dominated by the ${{O(n)}\choose{|\widehat{A}|}}$ choices of $\widehat{A}$. Restricting to\footnote{Technically we cannot use the knowledge of $\Alg^{(C,0)}|$,$|\Alg_C|$ when setting up the concentration inequality, but we can take the union bound also over this parameter.} $|\widehat{A}| \le |\Alg^{C,n/3}| =  |\Alg^{(C,0)}| + |\Alg_C| < |\Opt_C|/10\log(n)$,  we have $\leq n^{|\Opt_C|/10\log(n)}$ choices overall, which is negligible compared to the concentration we get from~\eqref{eq:G=B}.

\paragraph{The labels of good events.}
We have shown that, w.h.p.\ over the $I_1$-vs-$\overline{I_1}$ ordering of elements, we have many good events for any choice of $\Alg$: 
\[\sum_t G(t) \ge 1/2\cdot \sum_t B(t) \ge 1/(2C)\cdot \Opt^C(I_1(t) \mid \Alg,\tau_{\Alg} \mid I_2, \tau_{<C}).\] 
Now, fix such a choice of $I_1$-vs-$\overline{I_1}$, and reveal the labels of $\ot$ elements in forward order. Each good event $G(t)$ converts to an element added to $\Alg_C$ iff $C(\ot) = C$, i.e., with probability $1/\log(n)$, even conditioned on any choice of labels for earlier $\ot$. Therefore, $\sum_t\big( G(t) - \log n \cdot |\Alg_C^{(t)}| \big)$  forms a martingale, where we use $\Alg_C^{(t)}$ to denote the set of $C$-phase elements added to $\Alg$ by the time we finish processing $\ot$.  Using the same martingale concentration inequality (Theorem~\ref{thm:martingale-concentration})%
\footnote{Note that now the variance is larger, namely $O(\log(n))$ for each good event. But because we don't need to take another union bound over $n^{O(|\Opt_C|/\log(n))}$ choices of $\widehat{A}$ the weaker tail bound of $\exp(-\Omega(\Opt_C/\log(n)))$ is sufficient.}, we also have that w.h.p.\ over the choice of labels, at the end of the $C$-phase, 
\[
|\Alg_C|
=
\Omega\!\left(\frac{\sum_t G(t)}{\log n}\right)
=
\Omega\!\left(\frac{\Opt^C(I_1(t) \mid \Alg,\tau_{\Alg} \mid I_2, \tau_{<C})}{C\log n}\right).
\]
which completes the proof of  \Cref{eq:singleScale}.
\end{proof}

\subsubsection{Step 2: Transition between scales}
\begin{proof}[Proof of \Cref{eq:transferLemma}]
Similar to the proof of \Cref{eq:singleScale},  we first prove that the inequality fails with an exponentially small probability for any given set $A$ and thresholds $\tau_A$, $\tau_{<C}$, and then we take a union bound over every possible choice of $A, \tau_A, \tau_{<C}$.

Let us fix some choice of $\widehat{\Alg},\widehat{\tau_{\Alg}}, \widehat{\tau_{<C}}$. To simplify the proof, we analyze a slightly different partitioning between $I_1$ and $I_2$, where we first condition on their union $I_1 \cup I_2$ and then each element is placed in either interval i.i.d.\   with probability $1/2$  (instead of a uniform partitioning where we always have $|I_1|=|I_2|$). We will be prove that for this i.i.d.\ partitioning the inequality fails with an exponentially small probability. Since i.i.d.\ partitioning has at least $\Omega(1/\sqrt{n})$ of having  $|I_1|=|I_2|$, this would imply that even for uniform partitioning the probability of the inequality failing is exponentially small. Henceforth, consider i.i.d.\ partitioning.

%Due to the random order, this is exactly equivalent to using contiguous intervals, where we sample the length of $I_1$ from $\text{Binomial}(2n/3,1/2)$. It is also approximately equivalent to the true partition at exactly $n/3$ --- with high probability and up to $1\pm \tilde{O}(1/\sqrt{n})$-factor. 

% Our goal is thus to prove that whp:

% \begin{align}\label{eq:transferLemma-hats}
%         \Opt^C \big(I_1 \mid \widehat{\Alg},\widehat{\tau_{\Alg}} \mid I_2, \widehat{\tau_{<C}} \big)  = \Omega\Big(\Opt^C \big(I_2 \mid \widehat{\Alg},\widehat{\tau_{\Alg}} \mid I_2, \widehat{\tau_{<C}} \big) \Big).
%     \end{align} 
Consider the set $\widehat{\Opt_C} := \Opt^C \big(I_1 \cup I_2 \mid \widehat{\Alg},\widehat{\tau_{\Alg}} \mid I_2, \widehat{\tau_{<C}} \big)$ that maximizes the {\em total} $C$-elements contribution from both intervals, subject to sufficient $<C$-contributions from $I_2$ (and the usual constraints of feasibility and $\widehat{\Alg},\widehat{\tau_{\Alg}}$-contribution). Let $T_{<C} \subseteq I_2$ be the maximum-value set supporting the $<C$ contribution towards $\Opt^C \big(I_1 \cup I_2 \mid \widehat{\Alg},\widehat{\tau_{\Alg}} \mid I_2, \widehat{\tau_{<C}} \big)$.

Now, resample the partition $I_1,I_2$, conditioning on $T_{<C} \subseteq I_2$ being the maximum value set supporting the $<C$ contribution towards $\widehat{\Opt_C}$.
For this fresh resampling of the partition, reveal the $I_1$-vs-$I_2$ labels of elements in $\Opt^C$ sequentially. 
%For each one, even after we condition on all $I_1$-vs-$I_2$ labels of previous elements, the conditioning on no $<C$-subset of $I_2$ having greater  value than $T_{<C}$ (in any XOS clause), means each element is only more likely to be in $I_1$. Therefore letting $R$ be the set of elements whose $I_1$-vs-$I_2$ labels have been revealed, $|R \cap \widehat{\Opt_C} \cap I_1| - |R \cap \widehat{\Opt_C} \cap I_2|$ is a sub-martingale as $R$ increases. 
To justify the submartingale formally, fix a deterministic tie-breaking order on all clause--witness
pairs \((a,T)\), ordered first by the corresponding global \(C\)-rank on \(I_1\cup I_2\) and then
lexicographically. Let \((a^\star,T^\star)\) be the first available pair, i.e. the first pair with
\(T^\star\subseteq I_2\). Conditioning on \((a^\star,T^\star)\), after fixing \(T^\star\subseteq I_2\),
the remaining \(I_1/I_2\)-labels are still independent fair coins, and the additional condition that
no earlier pair is available is a decreasing event in the \(I_2\)-coordinates. Hence, by FKG,
each still-unrevealed element of \(\widehat{\Opt}_C\) is conditionally at most as likely to lie in
\(I_2\) as in \(I_1\). Therefore the process
$|R\cap \widehat{\Opt}_C\cap I_1|-|R\cap \widehat{\Opt}_C\cap I_2|$
is a submartingale as the labels are exposed.

We can therefore apply the same martingale concentration bound (\Cref{thm:martingale-concentration}) to argue that once we finish revealing all the labels, $|I_1 \cap \widehat{\Opt_C}| \ge |I_2 \cap \widehat{\Opt_C}|/2$ except with probability $2^{-\Omega(\widehat{\Opt_C})}$.
Since $\widehat{\Opt_C}|/2 \geq |\widehat{\Opt_C}|/3$ also holds (except with probability $2^{-\Omega(\widehat{\Opt_C})}$), we have 
\begin{gather}\label{eq:I_1-vs-I1cupI2}
    |I_1 \cap \widehat{\Opt_C}| \ge |\widehat{\Opt_C}|/3,
\end{gather}
except with probability $2^{-\Omega(\widehat{\Opt_C})}$. By union bound, this inequality holds w.h.p.\ over all choices of $\widehat{\Alg},\widehat{\tau_{\Alg}}, \widehat{\tau_{<C}}$ with $|\widehat{\Alg}| = \frac{\widehat{\Opt_C} \cap I_2}{10 \log(n)}$.

We therefore have w.h.p., $\Opt^C \big(I_1 \mid \Alg,\tau_{\Alg} \mid I_2, \tau_{<C} \big)$ is
\begin{align*}
     & \ge |I_1 \cap \Opt^C \big(I_1 \cup I_2 \mid \Alg,\tau_{\Alg} \mid I_2, \tau_{<C} \big)|  \\
    & \ge \Opt^C \big(I_1 \cup I_2 \mid \Alg,\tau_{\Alg} \mid I_2, \tau_{<C} \big)/3  \\
    & \ge \Opt^C \big( I_2 \mid \Alg,\tau_{\Alg} \mid I_2, \tau_{<C} \big)/3,
\end{align*}    
which completes the proof of \Cref{eq:transferLemma}.
\end{proof}

\section{Brief (and Optional) Preliminaries on Error-Correcting Codes}
This section introduces basic concepts from the study of error-correcting codes (ECC) that inspire the constructions of hard instances for Prophet Inequality (Section~\ref{sec:prophet}) and Stochastic Probing (Section~\ref{sec:probing}). We note that this section is optional: our constructions in upcoming sections do not rely on any particular results from coding theory, and we analyze them using self-contained elementary probabilistic arguments. But we include this brief section for better context and intuition. 

An {\em error correcting code} is a family of {\em codewords}, or vectors (usually of a fixed length and over a fixed field or alphabet). The two key properties of an error correcting code are:
\begin{itemize}
    \item {\bf Distance:} ECCs usually come with the guarantee that any two codewords should be at least some distance apart. In the original context of communication, if a codeword-message is corrupted by some errors, we can decode the corrupted message to the nearest codeword, which is hopefully also the original codeword-message. In this paper we consider relative Hamming distance, i.e., we consider Hamming distance between codewords normalized by the length of the codewords.
    \item {\bf Rate:} ECCs rely on redundant encodings to guarantee distance, and the {\em rate} is a measure of how efficient the encoding, or how large the family or codewords is. It is usually measured as $\frac{\log(\text{\# of codewords})}{\text{length of each codeword}}$; the details are not important for our paper.
\end{itemize}

\begin{example}
    The code $\{ABCD, GFED\}$ over alphabet $\{A,B,C,D,E,F,G\}$ has distance $3/4$ because the codewords agree at the fourth character ($D$). The rate is $(\log 2)/4$.
\end{example}

\begin{example}
    The code $\{1234,5634,1637\}$ over alphabet $\{1,2,3,4,5,6,7\}$ has distance $1/2$ because every pair of codewords agree on 2 out of 4 indices. The rate is $(\log 3)/4$.
\end{example}

\paragraph{Codes in the context of our paper.}
In the context of our paper, codewords will, roughly, correspond to feasible sets. 
Recall that we measure the performance of an algorithm compared to some benchmark: hindsight-optimal ``prophet'' for prophet inequality and optimal adaptive policy for stochastic probing.
\begin{itemize}
    \item {\em High rate} means more codewords/feasible sets, which means that the algorithm has more uncertainty about which optimal set will be chosen by the benchmark.
    \item {\em High distance} means that if the algorithm ``goes for'' a sub-optimal set, it will have small intersection with the benchmark's optimal set.  
    \item The {\em alphabet size} roughly corresponds to $n$, the number of elements; since we parameterize our approximation by $n$ (specifically, we want $\Tilde{\Omega}(\log^2 n)$-hardness), a smaller alphabet is desirable.  
\end{itemize}
While we want all these properties simultaneously,  there are well known theorems in coding theory that establish fundamental limitations on the possible achievable tradeoffs between alphabet size, distance, and rate.

%\Aviad{TBD if time - this could unify some of the probabilistic arguments but maybe not quite worth it :\\ We now prove a simple bound on the tradeoff achievable by a random ECC. \begin{claim}\label{claim:ECC} \end{claim}}

\paragraph{Doubly-optional advanced concepts from coding theory.}
We conclude this brief section with two useful variants of ECCs that are conceptually related to our constructions. 
\begin{itemize}
    \item {\bf List-decoding codes:} Sometimes we can get around the alphabet-vs-distance-vs-rate tradeoff limits by relaxing the distance requirement to only demand that a codeword is far from {\em most} other codewords. In the context of communication, if a message-codeword has some errors, it can at least be decoded to some short list of nearby codewords (where the right message may be, e.g., inferred from context). In the context of our construction, the higher $1-1/L^2$ distance guarantee only holds for codewords that are not siblings in the rooted tree, so with respect to this higher distance, the sibling-codewords correspond to the decoding list.
    
    \item  {\bf Tree codes:} The codewords of tree codes~\cite{Schulman93,Schulman96} are defined on the root-leaf paths of a rooted tree. Their distance guarantee between two codewords is proportional to the length from the latest common ancestor of the corresponding paths to the leaves. In the original context of communication, they are used to robustify interactive communication against errors. 
\end{itemize}

\section{$\Tilde{\Omega}(\log^2 n)$ Hardness for Prophet Inequality} \label{sec:prophet}

We consider the prophet inequality problem with a downward-closed feasibility constraint and a linear objective \cite{Rubinstein16}. 

\begin{defn}[Prophet Inequality]
We are given a downward-closed feasibility constraint  $\mathcal{F}$ on $n$ elements. Each element has a non-negative random value drawn independently (but not necessarily identically) from a known distribution. The realization of the random values are sequentially revealed from a known fixed adversarial order, and the goal is to design an online algorithm that immediately accepts/rejects the element while always being feasible in $\mathcal{F}$ and trying to maximize the sum of accepted element values. The benchmark is the expected hindsight optimum, i.e., the expected maximum total realized value of a feasible set. 
\end{defn}

The following is our main hardness result for prophet inequality.

\begin{theorem} \label{thm:PIHardness}
    For the prophet inequality problem with a downward-closed feasibility constraint and an additive objective, there exists an instance where every online algorithm is  $\Omega(L^2)$-competitive for $L= \Theta(\log n/\log\!\log n)$.
\end{theorem}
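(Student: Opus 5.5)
The plan is to build the layered tree instance of Figure~\ref{fig:prophet-LB} and then bound the algorithm layer by layer, as in the overview. Fix $L=\Theta(\log n/\log\log n)$. Build a rooted tree of depth $L$. Each vertex $u$ at depth $\ell\ge 1$ carries a subset $S_u$ of a ground set $U_\ell$ of layer-$\ell$ elements, with $|S_u| = m_\ell := L^{\ell}$; the exact sizes are a parameter to tune so that the total number of elements stays below $n$. Every element of $U_\ell$ independently has value $L^{-\ell}$ with probability $1/L^2$ and $0$ otherwise. Under this normalization each layer's subset has expected value $L^{-\ell}m_\ell/L^2$, and the best subset of a layer should realize a constant fraction of its elements. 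The feasible sets are all subsets of $\bigcup_{u\in P}S_u$ for some root-leaf path $P$. Elements arrive layer by layer from the leaves ($\ell=L$) up to $\ell=1$.

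The subsets $S_u$ are chosen by a random code, and I will verify three properties.
\begin{enumerate}
\item Siblings intersect in at most an $O(1/L)$ fraction of their elements.
\item Vertices with different parents intersect in at most an $O(1/L^2)$ fraction.
\item Each vertex has $L^{\Theta(L m_\ell)}$ children, enough that among the children of any fixed parent some child has an $\Omega(1)$ fraction of nonzero elements with high probability. That requires roughly $(L^2)^{\Omega(m)}$ children.
\end{enumerate}
The code can be built as follows. Each parent $p$ gets a random ``parent pool'' $Q_p\subseteq U_\ell$ of size about $L m_\ell$, and each child picks a random $m_\ell$-subset of $Q_p$. Siblings then overlap in about $m_\ell/L$ elements, and pools of different parents overlap in about a $1/L^2$ fraction when $|U_\ell|\approx L^3 m_\ell$. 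A Chernoff bound plus a union bound over all $\binom{\#\text{vertices}}{2}$ pairs should give the distance properties, as long as $m_\ell\gg \log(\#\text{vertices})$. The sizes then grow tower-like, each $m_{\ell+1}$ at least about $L m_\ell\log L$, and checking that the total $n$ stays compatible with $L=\Theta(\log n/\log\log n)$ is pure bookkeeping. For the prophet, I choose greedily from the root, always taking a child with $\Omega(1)$ of its elements active. This is the Big-Decisions-First ordering, and it gives $\Omega(1)\cdot L^{-\ell}m_\ell$ per layer. After normalizing so that every layer is worth $\Theta(1)$, the prophet gets $\Omega(L)$.

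For the algorithm, I normalize each layer's maximum to $1$. Let $\ell^*$ be the deepest layer in which the algorithm collects more than a $c/L^2$ fraction of some single subset; recall that deeper layers arrive first. The bound splits into three parts.
\begin{enumerate}
\item Layers deeper than $\ell^*$ contribute $O(1/L^2)$ each, hence $O(1/L)$ in total.
\item For layers shallower than $\ell^*$: once the algorithm holds $\omega(1/L^2)\cdot m_{\ell^*}$ elements of the layer-$\ell^*$ set of its eventual path, property 2 forces that path to pass through the parent of the unique subset (up to siblings) it overlapped. Inductively, this fixes the ancestors at every shallower layer. I must argue that the selected set, being contained in one feasible path, determines the layer-$\ell^*$ vertex's parent, and hence all its ancestors, before their elements arrive. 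Those ancestors' values are independent of everything seen so far, so each contributes in expectation a $1/L^2$ fraction. That totals $O(1/L)$.
\item Layer $\ell^*$ itself is the binary-case argument of \cite{BabaioffIK07,Rubinstein16}. When the algorithm accepts its first layer-$\ell^*$ element, only the siblings containing it stay consistent. By property 1, beyond an $O(1/L)$ fraction they are essentially disjoint, so the remainder of the chosen set gives only $O(1/L)$ beyond what is already committed. Property 2 bounds the cross-parent overlap.
\end{enumerate}
So the algorithm gets $O(1/L)$ overall, while the prophet gets $\Omega(L)$, a ratio of $\Omega(L^2)$.

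The main obstacle is making the layer-$\ell^*$ and commitment arguments rigorous against adaptive algorithms, because $\ell^*$ and the committed path are random and depend on the realized values. I would handle this by a union bound over ``what the algorithm could hold''. In each layer, for every set $X$ of $O(m_\ell/L)$ elements compatible with many vertices, I show that the active elements in the union of those vertices outside $X$ concentrate at their $1/L^2$ rate. That is, no sibling family gains an $\Omega(1/L)$ fraction beyond $X$ except with probability $L^{-\Omega(m_\ell/L)}$. This is small enough to beat the $\binom{|U_\ell|}{m_\ell/L}$ choices, provided the pool sizes are chosen with the right $\log L$ slack. The delicate part is tuning these parameters consistently across all layers while keeping $n$ small enough that $L$ remains $\Theta(\log n/\log\log n)$.
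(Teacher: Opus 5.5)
\textbf{Where the proof breaks: the code.} Your soundness decomposition is the paper's. You take the deepest layer $\ell^*$ where you hold more than a $\Theta(1/L^2)$ fraction of one subset. Deeper layers give $O(1/L^2)$ each. Layer $\ell^*$ gives at most a $1/L$ fraction before the held elements pin down a single subset. Shallower layers lie on a unique, not-yet-revealed ancestor path. With a working code, the layer-$\ell^*$ step needs no union bound over held sets $X$: once the subset is unique, its not-yet-arrived elements are independent of the history and contribute $O(1/L^2)$ in expectation.

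The code fails as you specified it, in two ways.
\begin{itemize}
\item \emph{Completeness.} A pool of size about $Lm_\ell$ with activation probability $1/L^2$ contains only about $m_\ell/L$ active elements in expectation. Every child is a subset of its parent's pool, and the pool reached depends only on shallower layers. So the prophet gets $O(1/L)$ per layer and $O(1)$ in total, and your ratio degrades to $O(L)$.
\item \emph{Distance.} Your union-bound condition $m_\ell\gg\log(\text{number of vertices})$ can never hold, because your property 3 already forces $\log(\text{number of children})=\Omega(m_\ell\log L)$. A pool of size $Lm_\ell$ has at most $\binom{Lm_\ell}{m_\ell}\le (eL)^{m_\ell}$ distinct $m_\ell$-subsets, far fewer than the $L^{\Theta(Lm_\ell)}$ children you want. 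For $L^{\Omega(m_\ell)}$ random siblings whose overlaps have mean about $m_\ell/L$, a tail bound against $L^{\Omega(m_\ell)}$ pairs certifies at best a constant-fraction overlap, not $O(1/L)$.
\end{itemize}

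\textbf{The missing idea: a huge alphabet.} The overlap thresholds $m_\ell/L$ and $m_\ell/L^2$ are far below the log of the number of codewords. So distance cannot come from concentration; it must come from a tiny per-position collision probability.
\begin{itemize}
\item The paper takes $|U_\ell|=p^2r_\ell$ with $p=n^{0.3}$, split into $r_\ell$ blocks. A node is a pair of vectors $(a,b)\in[p]^{r_\ell}\times[p]^{r_\ell}$ and uses element $(a_i,b_i)$ in block $i$.
\item Two random vectors agree on a fixed set of $d$ coordinates with probability $p^{-d}$. The union bound over all pairs is then $(r_L^3/p)^{d}$, which is negligible because $r_L^3=L^{6L}\ll p$ (\Cref{claim:ECC_PI}). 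This, not the growth of subset sizes, is what forces $L=\Theta(\log n/\log\log n)$.
\item Siblings share the parent's first-family vector and range over one common second family of size $L^{2r_\ell}$, so sibling overlap is at most $d_\ell=r_\ell/L$.
\item Cross-parent overlap is at most $d_{(<\ell)}=r_\ell/L^2$, via the injective map from parents to first-family vectors (the count in \eqref{eq:ProphLBSimpleIneq}).
\item A parent's implicit pool, $\{(a_i,y): i\in[r_\ell],\,y\in[p]\}$, has size $p\,r_\ell\gg L^2 r_\ell$. That is why completeness survives: a random child is active coordinate-wise with probability about $2/L^2$, essentially independently. So among $L^{2r_\ell}$ children, some child has $\Omega(r_\ell)$ active elements.
\end{itemize}
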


%\snote{Fix this para later based on intro} At a high-level, the reason $O(\log n)$-competitive algorithms are impossible for prophet inequality, unlike the secretary model, is because the elements with non-zero value might appear in increasing order of their values, which makes it impossible for the algorithm to union bound over all possible feasible subsets....

\subsection{Construction}

Let $L = \Theta(\log(n)/\log\log(n))$  such that $L^{2L} =o(n^{0.1})$.  We will provide a construction for which the expected hindsight optimum is $\Omega(L)$ but every online algorithm has expected total value $O(1/L)$, which will prove \Cref{thm:PIHardness}.

Define parameters $p := n^{0.3}$  and  $r_{\ell}:=L^{2\ell}$ for $\ell \in [L]$ (we will later need $r_L^3 = o(p)$).  Also, we set  $d_{\ell} := L^{2\ell-1} = r_{\ell} /L$ and $d_{(<\ell)} := L^{2\ell-2} = d_{\ell} /L = r_{\ell} /L^2$.

\paragraph{Basic Setup.}
We consider a rooted forest with $L+1$ \emph{layers}. For $\ell \in [L]$, each layer $(\ell-1)$ node will have $r_{L}^{d_{\ell}}= (L^{2L})^{d_{\ell}}$ children (the root is at layer $0$). 
Each node at layer $\ell \in [L]$  will correspond to a subset of elements (the root has no subset): the subsets for nodes at the same layer can intersect, but they are disjoint across layers. There is a correspondence between maximal feasible sets in $\mathcal{F}$ and root-leaf paths in this tree: on each path the union of subsets corresponding to all the vertices is feasible.  %correspond  is the union of subsets on all root-leaf paths in this forest. \Aviad{previous wording is confusing imho}

\paragraph{Layer-$\ell$ elements and arrival order.}
We use  $U_{\ell}$ to denote the set of all Layer-$\ell$ elements; we set $|U_{\ell}| = p^2\cdot r_{\ell}$. (Note that the total number of elements $\sum_\ell |U_\ell| \leq p^2 \cdot r_L \cdot L  < n$.)
Elements in 
$U_{\ell}$  each take  either value $0$ ({\em inactive}) or $1/r_{\ell}$ ({\em active}).
Every element, regardless of layer, is active with probability $2/L^2$.
Every layer-$\ell$ subset has size $r_{\ell}$.
Notice that the maximum total value for a layer-$\ell$ subset is $1$ (in case all elements are active).

  The elements arrive in reverse order of layers, i.e.~from the layer of leaves to the layer of root; the elements within a layer arrive in an arbitrary order.

\paragraph{Layer-$\ell$ subsets - double error correcting codes.}
Recall that $|U_{\ell}| = p^2\cdot r_{\ell}$, which we think of as $r_{\ell}$ {\em blocks} of $p^2$ elements. Each feasible vertex-subset will have exactly one element from each bucket. We can naturally identify between node subsets and pairs of  length-$r_{\ell}$ vectors over $[p]$: for $i\in [r_\ell]$: the subset has one of $p^2$ elements in the $i$-th bucket, encoding the respective $i$-th entries of both vectors. 
See also Figure~\ref{fig:Double-ECC} for an example.

\begin{figure*}[t!]
\centering
\begin{tikzpicture}[every node/.style={draw, rounded corners, align=center, font=\small}]
  % root and first-level nodes
  \node (root) at (0,0) {$\emptyset$};
  \node (left) at (-3,-1.5) {$\{a,b\}$};
  \node (right) at (3,-1.5) {$\{b,c\}$};
  \draw (root) -- (left);
  \draw (root) -- (right);

  % left subtree leaves (staggered diagonally)
  \node (l1) at (-5.2,-2.8) {$\{A1,\,B2,\,C3,\,D4\}$};
  \node (l2) at (-3.0,-3.5) {$\{A3,\,B4,\,C5,\,D6\}$};
  \node (l3) at (-0.8,-4.2) {$\{A1,\,B6,\,C3,\,D7\}$};
  \draw (left) -- (l1);
  \draw (left) -- (l2);
  \draw (left) -- (l3);

  % right subtree leaves (staggered diagonally)
  \node (r1) at (0.8,-2.8) {$\{G1,\,F2,\,E3,\,D4\}$};
  \node (r2) at (3.0,-3.5) {$\{G3,\,F4,\,E5,\,D6\}$};
  \node (r3) at (5.2,-4.2) {$\{G1,\,F6,\,E3,\,D7\}$};
  \draw (right) -- (r1);
  \draw (right) -- (r2);
  \draw (right) -- (r3);
\end{tikzpicture}

\raggedright
\small
At the leaves layer, the lower rate, higher distance ($3/4$) code is $\{ABCD,GFED\}$;
the higher rate, lower distance ($1/2$) code is $\{1234,5634,1637\}$. 
\caption{Example of double error-correcting code construction}
\label{fig:Double-ECC}
\end{figure*}

%\snote{A slightly simpler construction might be to use $|U_{\ell}| = p^2$ elements, where we don't need to partition into $r_\ell$ blocks and instead just select uniformly at random. Decide this later.}
Formally, for every layer $\ell \in [L]$, 
we define two {\em families} of  vectors:  all these vectors are in $[p]^{r_{\ell}}$ (i.e., have coordinates in $[p]$ and have length $r_{\ell}$), where
\begin{itemize}
    \item The first family has $r_{L}^{d_{(<\ell)}}= (L^{2L})^{d_{(<\ell)}}$ vectors, with each coordinate drawn uniformly and independently at random in $[p]$. 
    \item The second family has $r_{L}^{d_{\ell}}= (L^{2L})^{d_{\ell}}$ vectors, with each coordinate drawn uniformly and independently at random in $[p]$.  %{\color{red}This $r_{\ell}^{d_{\ell}}$ needs to be changed to $(L^{2L})^{d_{\ell}} = (L^2)^{r_\ell}$. Similarly change $r_{\ell}^{d_{(<\ell)}}$ above}
\end{itemize}
We define a feasible subset of $U_{\ell}$ for each pair of vectors from the respective families, i.e., the $i$-th element in the subset is the element corresponding to the pair of respective $i$-th entries from each vector. 
In \Cref{claim:ECC_PI}, we will show that this choice of parameters implies that any pair of layer-$\ell$ subsets share  $\leq d_{\ell}$ elements, and  if they have different parent nodes then  share  $\leq d_{(<\ell)}$ elements.

% \paragraph{Connecting the layers of the forest}
% Notice that choosing a Layer-$\ell$, subset corresponds to choosing two polynomials over  $\mathbb{F}_{p}$, of respective degrees $d_{(<\ell)},d_{\ell}$. Therefore the total number of Layer-$\ell$ subsets is $p^{d_{(<\ell)} + d_{\ell}} = p^{\Theta(L^{2\ell-1})} $. This is smaller than $p^{L^{2\ell}}$, the number of ways we can choose just the first-coordinate polynomial for layer $\ell+1$. 

% We construct an arbitrary mapping from layer-$\ell$ nodes to first-coordinate polynomial for layer-$(\ell+1)$, and create an edge between each layer-$\ell$ node and each of the layer-$(\ell+1)$ nodes that share the matching first-coordinate polynomial. We discard the remaining unmatched layer-$(\ell+1)$ nodes. 

\paragraph{Connecting the layers of the forest.}
%Notice that, after fixing the families, choosing a layer-$\ell$, subset corresponds to choosing the two vectors, so a total of $p^{d_{\ell}+d_{(<\ell)}}$ choices; this is less than $p^{d_{<\ell+1}}$, \snote{check this} or the number of ways to choose just the first vector for layer $\ell+1$.

Notice that the number of nodes  at layer $\ell$ equals $(r_L)^{d_{<\ell+1}}$, which is at most number of first family vectors 
$(r_L)^{d_{(<\ell+1)}} $ at layer $\ell$ since
\begin{align}
\label{eq:ProphLBSimpleIneq}
d_{<\ell+1} := \sum_{\ell'<\ell+1} d_\ell &= \sum_{\ell'<\ell+1} L^{2\ell'-1} =  \frac{1}{L}\cdot \frac{L^{2\ell+2}-1}{L^2-1} \notag \\
&=\Theta(L^{2\ell-1})\leq  L^{2\ell}= d_{(<\ell+1)}.
\end{align}

Thus, we construct an arbitrary injective mapping from layer-$(\ell-1)$ nodes to first-family vectors for layer-$\ell$, and create an edge between each layer-$(\ell-1)$ node and each of the layer-$\ell$ nodes that share the matching first-family vector. We discard the remaining unmatched layer-$\ell$ nodes.

\subsection{Proof of \Cref{thm:PIHardness}}

We prove that the expected hindsight optimum is $\Omega(L)$ but every online algorithm has expected total value $O(1/L)$, which implies $\Omega(L^2)$ hardness on competitive ratio.

\begin{lemma}[Completeness: Lower bound on \Opt]\label{lemma:PIAlgLB}
The expected value of the hindsight optimum is $\Omega(L)$.
    %With high probability, there is a feasible set where every element is active (hence the total value is $L$).
\end{lemma}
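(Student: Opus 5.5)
We prove the lemma by exhibiting a greedy root-to-leaf path in the tree that collects $\Omega(1)$ value at every layer. Concretely, we descend from the root layer by layer. At each step, from the current layer-$(\ell-1)$ node $u$, we move to the child whose layer-$\ell$ subset has the largest number of active elements. The hindsight optimum is at least the total value of the resulting path, so it suffices to show that each layer contributes $\Omega(1)$ in expectation.

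Fix a layer $\ell$ and condition on the choices made at layers $1,\dots,\ell-1$. These choices depend only on activity of elements in layers $<\ell$, and layers are disjoint, so the activities of $U_\ell$ remain independent with probability $2/L^2$ each. Node $u$ has $N := r_L^{d_\ell} = (L^{2L})^{d_\ell} = L^{2r_\ell}$ children (using $d_\ell L = r_\ell$). Each child subset has $r_\ell$ elements, and a fixed child subset is fully active with probability $(2/L^2)^{r_\ell} = 2^{r_\ell} L^{-2r_\ell}$. Hence the expected number of fully active children is $N\cdot 2^{r_\ell}L^{-2r_\ell} = 2^{r_\ell}\ge 2$, and more generally many children are expected to have at least half their elements active. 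We aim for the weaker target: with constant probability, some child has at least $r_\ell/2$ active elements, i.e.\ value $\ge 1/2$.

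To turn the first-moment bound into a probability bound we use the second moment method on $X$, the number of children of $u$ whose subsets have at least $r_\ell/2$ active elements (or simply all active). For two sibling subsets sharing $s$ elements, the joint probability of both being fully active is $q^{2r_\ell - s}$ with $q = 2/L^2$. The overlap structure comes from \Cref{claim:ECC_PI}: siblings share the first-family vector and differ in the second-family vector, and since the elements in each bucket encode the pair of coordinates, the overlap $s$ equals the number of coordinates on which the two second-family vectors agree. These are random vectors over $[p]$ with $p = n^{0.3}$, so coordinates agree with probability $1/p$ and $s\le d_\ell = r_\ell/L$. Then $\E[X^2]/\E[X]^2 \le 1 + 1/\E[X] + \max_{\text{pairs}} q^{-s}$, up to the correction from the distribution of $s$. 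Bounding this correction term is the main obstacle: $q^{-s}$ can be as large as $(L^2/2)^{r_\ell/L}$, which is not tiny. I would handle it by averaging over the random code rather than taking a worst case. The expected contribution $\E_s[q^{-s}] \le (1 + (L^2/2)/p)^{r_\ell} = 1 + o(1)$, because $r_\ell L^2 \le r_L^3 = o(p)$. Thus $\E[X^2] \le (1+o(1))\E[X]^2 + \E[X]$, and Paley–Zygmund gives $\Pr[X\ge 1] \ge \E[X]^2/\E[X^2] = \Omega(1)$ since $\E[X]\ge 2$.

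With constant probability per layer, the greedy step therefore finds a fully active child, contributing value $1$. Since this holds conditioned on any history, linearity of expectation over the $L$ layers gives $\E[\Opt] \ge \sum_{\ell} \Omega(1) = \Omega(L)$. If the random code is fixed (not averaged over), I would instead note that \Cref{claim:ECC_PI} holds for a typical code, and run the same second-moment argument conditioned on a good code. The pairwise-overlap sum over siblings then concentrates near its expectation by a simple counting argument.
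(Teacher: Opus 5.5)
Your proof is correct, but it takes a different route from the paper.

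\textbf{The paper's route.} It first exposes the activations. For each bucket and each value of the first coordinate, the $p$ elements of that ``row'' are at least a $1/L^2$-fraction active w.h.p.\ by Chernoff. Conditioned on this, the children of the current node are the pairs $(a,b)$, where the second-family vectors $b$ are independent and uniform. So the events ``this child has at least $r_\ell/2$ active elements'' are \emph{independent} across children, each with probability at least $L^{-r_\ell}$. With $L^{2r_\ell}$ children, a one-line $(1-\pi)^N\le \exp(-L^{r_\ell})$ bound gives success w.h.p.\ at every layer, and no second moment is needed.

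\textbf{Your route.} You fix the code, use the activation randomness, and handle the dependence among siblings with Paley--Zygmund. The key computation is that the overlap of two siblings is $s\sim\mathrm{Bin}(r_\ell,1/p)$ over the random second family. Hence $\E[q^{-s}]\le \exp(r_\ell L^2/(2p))=1+o(1)$, which is exactly where $r_\ell L^2=o(p)$ enters. Averaging over the layer-$\ell$ code while conditioning on the history is legitimate: that code and those activations are independent of the history, and sibling overlaps depend only on the second family, not on the parent's first-family vector.

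\textbf{A minor slip.} Your displayed ratio bound ``$1+1/\E[X]+\max q^{-s}$'' is loose. The exact ratio is $1/\E[X]$ plus the average of $q^{-s}$ over ordered sibling pairs. However, the inequality you actually use, $\E[X^2]\le(1+o(1))\E[X]^2+\E[X]$ over the joint randomness, is right.

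\textbf{Trade-offs.} The paper's argument is shorter and gives a w.h.p.\ statement over code and activations jointly. It therefore applies directly to a typical fixed code and even survives a union bound over all parents. Yours yields a fully active child (value exactly $1$) per layer and makes explicit how sibling correlations are tamed. The cost is that you only get a bound in expectation over the code.

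\textbf{The fixed-code step.} The vague ``concentration by counting'' remark at the end is unnecessary. Since $\Opt\le L$ always and $\E_{\text{code}}\E_{\text{act}}[\Opt]\ge cL$, a Markov argument shows that at least a $c/2$ fraction of codes have $\E_{\text{act}}[\Opt]\ge cL/2$. Intersecting this with the w.h.p.\ event of \Cref{claim:ECC_PI} gives a single instance satisfying both the completeness and soundness lemmas.
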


\begin{proof}
We construct an optimal solution by traversing a root-leaf path. 
At each step, we have $(r_L)^{d_{\ell}} $ choices for the next node, each corresponding to a vector from the second family (which consists of  independently drawn length-$r_{\ell}$ vectors with coordinates  in $[p]$).

For any fixed choice of first-family vector and every entry in the corresponding second-family vectors, we have that with very high probability,  at least $1/L^2$-fraction of all corresponding $U_{\ell}$ elements are active  (since activation probabilities were $2/L^2$). Thus the probability that a randomly drawn second-family vector corresponds to $\Omega(r_\ell)$ active elements is at  least $(L^{-2})^{r_{\ell}/2} = (L^{-r_{\ell}}) = L^{-L \cdot d_{\ell}}$.
Since each node has $(L^{2L})^{d_{\ell}} \gg 1/\Pr[\text{$\Omega(r_\ell)$ active elements}]$ children, we have that, for every layer, with high probability one of the children of the current node has $\Omega(r_\ell)$ active elements. %{\color{red}To make this last step work, we need to choose $ L^{-2L \cdot d_{\ell}}$ vectors (which is more than what's currently written)}
\end{proof}

\begin{lemma}[Soundness: Upper bound on ALG]\label{lemma:PIAlgUB}
    Any algorithm has total expected value at most $O(1/L)$.
\end{lemma}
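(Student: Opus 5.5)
We follow the three-part decomposition from the technical overview. First we establish the coding property that the paper defers to \Cref{claim:ECC_PI}: any two layer-$\ell$ subsets share at most $d_\ell = r_\ell/L$ elements, and at most $d_{(<\ell)} = r_\ell/L^2$ elements if their parents differ. This holds with high probability by a union bound. Two random vectors in $[p]^{r_\ell}$ agree in more than $k$ coordinates with probability at most $\binom{r_\ell}{k}p^{-k}$. With $p=n^{0.3}$ and $r_L^3=o(p)$, this is $p^{-\Omega(k)}$, which beats the at most $(r_L^{d_\ell})^2$ pairs when $k=d_\ell$; for subsets with different parents, it beats the $(r_L^{d_{(<\ell)}})^2$ pairs of first-family vectors when $k=d_{(<\ell)}$.

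Next we fix any online algorithm and its final set $\Alg$, which lies inside some root-leaf path $v_1,\dots,v_L$. The algorithm's value at layer $\ell$ is $|\Alg\cap U_\ell \cap \text{active}|/r_\ell$. We define $\ell^*$ as the smallest layer index (i.e.\ the last to arrive) at which the algorithm selects at least $4d_{(<\ell)}=4r_\ell/L^2$ elements; this threshold sits slightly above the pairwise sibling-overlap scale.

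We bound the value in three regimes.

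\textbf{Layers $\ell>\ell^*$.} By definition the algorithm selects fewer than $4r_\ell/L^2$ elements at each such layer, so its value there is $O(1/L^2)$ per layer and $O(1/L)$ in total.

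\textbf{Layer $\ell^*$.} The elements selected at this layer must lie in one common subset, namely the one compatible with the final path. Each element is chosen before its activity is known only if it is selected blindly; otherwise it is chosen after observing that it is active. To handle adaptivity, we union bound over candidate subsets, following the binary argument of \cite{BabaioffIK07,Rubinstein16}. The algorithm can get value $1$ at this layer only if some layer-$\ell^*$ subset has many active elements. We instead bound the number of active elements in the intersection of the selected set with any one subset. Given the distance property, the expected value of the best subset the algorithm can commit to is $O(1/L)$. The argument is that after the first $d_{\ell^*}$ selections, the candidate subsets consistent with them are essentially unique, so the remaining elements are active only with probability $2/L^2$. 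Before that point, the at most $d_{\ell^*}=r_{\ell^*}/L$ selected elements contribute at most $1/L$.

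\textbf{Layers $\ell<\ell^*$.} Having selected at least $4r_{\ell^*}/L^2$ elements of a layer-$\ell^*$ subset, the double-code property shows that all layer-$\ell^*$ subsets containing those elements share the same parent. By induction, they share the entire path to the root. This commitment is fixed before any layer-$\ell<\ell^*$ element arrives, because arrival runs from leaves to root. The value from these layers is therefore the value of a path fixed independently of their activities, namely $\sum_{\ell<\ell^*} (2/L^2)\cdot 1 = O(1/L)$.

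Summing the three regimes gives $O(1/L)$.

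The main obstacle is the layer-$\ell^*$ bound, together with the precise commitment statement. One concern is that the sibling overlap $r/L$ exceeds the threshold $4r/L^2$, so siblings remain ambiguous. We argue that this is harmless, since siblings share the same parent, and the only thing we need to pin down is the parent path. The careful step is ensuring that the selected elements lie in a single subset on the final path even though this is decided adaptively. If the single-layer bound turns out looser than $O(1/L)$, the fallback is to redefine $\ell^*$ via an $\omega(1/L^2)$-fraction threshold, as the overview suggests, and absorb the resulting loss into constants.
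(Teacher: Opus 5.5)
Your decomposition matches the paper's proof. There is a pivot layer where the selections exceed the non-sibling overlap bound $d_{(<\ell)}$ (your constant $4$ is immaterial). Layers below the threshold get a deterministic $O(1/L^2)$ bound each, the pivot gets $1/L+2/L^2$, and the remaining layers are handled by commitment to a single root path via \Cref{claim:ECC_PI}.

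\textbf{The gap: the pivot is defined in the wrong direction.} You take $\ell^*$ to be the \emph{smallest} index meeting the threshold, i.e.\ the last such layer to arrive. You then claim every layer $\ell>\ell^*$ is below threshold ``by definition.'' Under your definition it is the later-arriving layers $\ell<\ell^*$ that are below threshold. The earlier-arriving layers $\ell>\ell^*$ may have crossed it. For example, if the algorithm crosses the threshold at layer $L$ and again at some layer $\ell<L$, then $\ell^*\le \ell$, and nothing in your first case controls layer $L$.

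Applying your layer-$\ell^*$ argument separately to every crossed layer does not repair this. It gives $O(1/L)$ per layer, hence only $O(1)$ in total.

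\textbf{The fix.} As in the paper, take $\ell^*$ to be the \emph{first} crossing in arrival order, i.e.\ the largest index, closest to the leaves. This is a stopping time. Every earlier-arriving layer is then below threshold. At the crossing moment the path to the root is already determined, before any element of a layer $\ell'<\ell^*$ arrives, which is exactly what your third case uses. With this change your three cases coincide with the paper's argument.

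\textbf{Smaller points.}
\begin{itemize}
\item The threshold $4d_{(<\ell)}$ sits above the \emph{non-sibling} overlap bound $d_{(<\ell)}$, not a sibling one; siblings may share up to $d_\ell$, as you note later.
\item No induction is needed for the root path. Sharing more than $d_{(<\ell^*)}$ elements forces agreement on the first-family vector, hence the same parent, and in a tree this fixes the whole path.
\item At layer $\ell^*$ there is no ``blind'' selection, and no union bound over subsets is needed. Values are seen before each decision.
\item The first $d_{\ell^*}+1$ selections contribute deterministically at most $(d_{\ell^*}+1)/r_{\ell^*}$, and they determine a unique subset.
\item That subset's not-yet-arrived elements are active independently of the history, so they contribute at most $2/L^2$ in expectation.
\item Consequently, the fallback in your last paragraph is unnecessary.
\end{itemize}
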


The proof of this lemma relies on the following crucial claim about the size of intersection between any two subsets.
\begin{claim}\label{claim:ECC_PI}
    With high probability, every pair of layer-$\ell$ subsets share at most $d_{\ell}$ elements. Furthermore, if they don't agree on the first-family vector (i.e., don't have the same parent node), they share at most $d_{(<\ell)}$ elements.
\end{claim}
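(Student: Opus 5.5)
The claim follows from a first-moment computation plus a union bound, done separately for the two kinds of pairs. A layer-$\ell$ subset is indexed by a pair $(u,w)$, with $u$ from the first family and $w$ from the second. Its $i$-th element is the $(u_i,w_i)$ element of bucket $i$. Two subsets $(u,w)$ and $(u',w')$ therefore share an element in bucket $i$ exactly when $u_i=u'_i$ and $w_i=w'_i$. Since the buckets are disjoint, the intersection size is the number of coordinates $i\in[r_\ell]$ where both vector pairs agree. All coordinates of all vectors are independent and uniform on $[p]$, so for distinct random vectors each coordinate agrees with probability $1/p$, independently across coordinates.

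\textbf{Case 1: same parent.} Here $u=u'$ and $w\neq w'$ (as indices). The intersection is the number of coordinates where $w_i=w'_i$, which is $\mathrm{Bin}(r_\ell,1/p)$. Hence
\[
\Pr[\text{intersection}>d_\ell]\le \binom{r_\ell}{d_\ell}p^{-d_\ell}\le (r_\ell/p)^{d_\ell}.
\]
The second family has $(L^{2L})^{d_\ell}$ vectors, so there are at most $(L^{2L})^{2d_\ell}$ pairs $(w,w')$. The union bound gives total failure probability at most $(L^{4L} r_\ell/p)^{d_\ell}$. Because $r_\ell\le r_L=L^{2L}=o(n^{0.1})$ and $p=n^{0.3}$, the ratio $L^{4L}r_\ell/p$ is at most $n^{-0.1}$ or so, and the bound is $n^{-\Omega(d_\ell)}$.

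\textbf{Case 2: different parents.} Here $u\neq u'$, and the second-family vectors are arbitrary; they may even coincide. Agreement on coordinate $i$ requires $u_i=u'_i$ and $w_i=w'_i$.
\begin{itemize}
\item If $w\neq w'$, each coordinate agrees with probability $1/p^2$.
\item If $w=w'$, agreement reduces to $u_i=u'_i$, with probability $1/p$.
\end{itemize}
In both sub-cases the per-coordinate agreement probability is at most $1/p$, independently across coordinates, since $u,u'$ are independent of everything else. This gives
\[
\Pr[\text{intersection}>d_{(<\ell)}]\le (r_\ell/p)^{d_{(<\ell)}}.
\]
The number of pairs is at most $(L^{2L})^{2d_{(<\ell)}}\cdot (L^{2L})^{2d_\ell}$. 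Since $d_\ell=L\,d_{(<\ell)}$, this is $(L^{2L})^{O(L\,d_{(<\ell)})}=L^{O(L^2 d_{(<\ell)})}$. The union bound needs
\[
L^{O(L^2)}\cdot r_\ell/p<1 .
\]
It holds because $L^{O(L^2)}$ is still $n^{o(1)}$ only if $L^2\log L=o(\log n)$, and this fails for $L=\Theta(\log n/\log\log n)$.

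\textbf{Main obstacle.} This pair count is where I expect trouble: the count of second-family choices overwhelms the $p^{-d_{(<\ell)}}$ tail. I would first use the $1/p^2$ bound when $w\neq w'$, giving a tail of $p^{-2d_{(<\ell)}}$; but that still does not absorb the factor $L^{2L\cdot 2d_\ell}$. The fix is to strengthen the per-coordinate bound by conditioning. Fix the pair $(u,u')$ and first restrict to the set $S$ of coordinates where $u_i=u'_i$. With high probability over all first-family pairs, $|S|\le d_{(<\ell)}$ already. That bound costs only $(L^{2L})^{2d_{(<\ell)}}$ pairs against $(r_\ell/p)^{d_{(<\ell)}}$, which is fine.

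Given $|S|\le d_{(<\ell)}$, the intersection is at most $|S|\le d_{(<\ell)}$ for every choice of $w,w'$. So no union bound over the second family is needed at all.

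Similarly, Case 1 is handled by the union over second-family pairs alone, which costs $(L^{2L})^{2d_\ell}$ against $(r_\ell/p)^{d_\ell}$, again fine. Finally, a union bound over the $L$ layers finishes the proof with high probability.
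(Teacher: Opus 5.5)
Your final argument is correct and is essentially the paper's proof. The paper also bounds pairwise coordinate agreement separately within the first family (by $d_{(<\ell)}$, via a union bound over the $\binom{r_L^{d_{(<\ell)}}}{2}$ pairs) and within the second family (by $d_\ell$), and then uses that a shared element in bucket $i$ needs both coordinates to agree, so it never union-bounds over full pairs of subsets. Delete your first Case~2 attempt, the union bound over all pairs of subsets whose justifying sentence contradicts itself, since the conditioning fix you give afterwards replaces it.
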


Before proving \Cref{claim:ECC_PI}, we complete the proof of the lemma.

\begin{proof}[Proof of~\cref{lemma:PIAlgUB}]
    Consider the first layer (starting from leaf-layer)  $\ell$ where the algorithm selects more than $d_{(<\ell)}$ elements. 
We now bound the total expected contribution from this layer, as well as the ones above and below in the forest. 

    \emph{Layers below}:
    Since $\ell$ is first, from each layer $\ell'>\ell$ the algorithm selected at most $d_{<\ell'}$ elements, by definition, for a total value of at most $1/L^2$ from each layer.

    \emph{Layer $\ell$}: 
    In this layer, the algorithm may be able to continue to select elements up until $d_{\ell}$ for a total value of $1/L$. From the moment it crosses $d_{\ell}$ elements, it has committed to a single subset in this layer by \Cref{claim:ECC_PI}, of which in expectation only $2/L^2$-fraction of elements will be active.

    \emph{Layers above}:
    Once the algorithm selected more than $d_{(<\ell)}$ elements from layer $\ell$, by \Cref{claim:ECC_PI} there is only one path from root to the single corresponding layer-$(\ell-1)$ node. On each node on this path, at most $2/L^2$-fraction of the elements will be active, in expectation. Thus the total expected value from layers $\ell' < \ell$ is also at most $2/L$. 

    This completes the proof of \Cref{lemma:PIAlgUB}, except the missing \Cref{claim:ECC_PI}.
    \end{proof}

\paragraph{Distance guarantee of the random (double) error correcting code}

\begin{proof}[Proof of \Cref{claim:ECC_PI}]
We prove that, w.h.p., every pair of vectors in the first family agree on at most $d_{(<\ell)}$; the proof for the second family is identical (replacing $d_{(<\ell)}$ with $d_{\ell}$) .

Consider a vector $A$ in the first family, and draw the entries of the second vector $B$ one-by-one. Each one has probability $1/p$ of matching with the corresponding entry of $A$.  Thus the probability that a particular subset of $d_{(<\ell)}$ entries from $B$ agrees with $A$ is  $(1/p)^{d_{(<\ell)}}$. Taking a union bound over all $d_{(<\ell)}$-subsets of $B$'s entries, and all pairs $A,B$ in the first family, we have that the probability that any two vectors agree on more than $d_{(<\ell)}$ entries is at most  
$$
(1/p)^{d_{(<\ell)}} \times {{r_{\ell}}\choose{d_{(<\ell)}}} \times {{r_{L}^{d_{(<\ell)}}}\choose{2}} < (r_{L}^{3}/p)^{d_{(<\ell)}},
$$   
which is negligible by our setting of parameters as $r_{L}^3\ll p$.
\end{proof}

Putting \Cref{lemma:PIAlgLB} and \Cref{lemma:PIAlgUB} completes the proof of \Cref{thm:PIHardness}.

\section{$\Tilde{\Omega}(\log^2 n)$  Adaptivity Gap for Stochastic Probing} \label{sec:probing}

We consider the stochastic probing problem, as originally defined by Gupta and Nagarajan \cite{GN-ICALP13}, and then subsequently studied in several works such as \cite{LiLZ25,KMS-STOC24,PRS-APPROX23,BSZ-RANDOM19,GNS-SODA17}.

\begin{defn}[Stochastic Probing]
    We are given two downward-closed feasibility constraints on elements $[n]$: \emph{outer feasibility} constraints  $\mF_{\text{out}}$ and \emph{inner feasibility} constraints $\mF_{\text{in}}$. We are also given $n$ independent distributions, where the random value $X_i$ of $i$-th element is drawn from the $i$-th distribution. The goal is to design an algorithm that  \emph{probes} (i.e., find random realization) a subset $S$ of elements that are feasible in the outer feasibility constraint, i.e. $S \in \mF_{\text{out}}$, while maximizing its total value, $\max_{T \in \mF_{\text{in}}} \sum_{i \in T\cap S} X_i$, in expectation.
\end{defn}

This problem admits two natural classes of algorithms: an \emph{adaptive} algorithm sequentially probes the elements, where the decision of which element to probe next depends on the realizations of the previously probed elements. On the other hand, a \emph{non-adaptive} algorithm must upfront decide a subset of elements $S \in \mF_{\text{out}}$ to probe, and its value is $\E[\max_{T\in \mF_{\text{in}}} \sum_{i\in S \cap T} X_i]$. 

For any instance of stochastic probing, clearly  the value of the optimal adaptive algorithm is at least the  value of the optimal non-adaptive algorithm. We say that a  stochastic probing instance has $\alpha$ \emph{adaptivity gap}  if the value of the optimal adaptive algorithm is at most $\alpha$ times the value of the optimal non-adaptive algorithm (in expectation).

\begin{theorem} \label{thm:adapGapLB}
There exists an instance of the stochastic probing problem with general down-closed outer and inner constraints where the adaptivity gap is $\Omega(L^2)$ for $L= \Theta(\log n/\log\!\log n)$.
\end{theorem}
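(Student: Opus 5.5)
The plan is to realize the caterpillar construction from the technical overview. We then show that the adaptive optimum is $\Omega(L)$ while every non-adaptive policy gets $O(1/L)$, which gives the $\Omega(L^2)$ gap.

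\textbf{Construction.} Take the meta-tree with $L$ layers, as in the prophet construction of \Cref{thm:PIHardness}. Replace each meta-node at layer $\ell$ by a block: an $L^2$-ary tree of depth $D_\ell \approx L^{2\ell}$. Its leaves are identified with the children of that meta-node, so a layer-$\ell$ block has $L^{2D_\ell}$ leaves, playing the role of the $(L^{2L})^{d_\ell}$ children in \Cref{sec:prophet}.

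Each vertex of the concatenated tree at meta-layer $\ell$ is assigned one element of value $1/D_\ell$, active independently with probability $2/L^2$. Elements are drawn from a universe $U_\ell$ of $D_\ell$ buckets of $p^2$ elements each. The vertex at depth $i$ inside its block gets an element of bucket $i$, encoded by a pair of coordinates:
\begin{itemize}[topsep=0pt]
\item the first coordinate is a uniformly random symbol in $[p]$, fixed per block, so it identifies the meta-path above;
\item the second coordinate is a uniformly random symbol per tree vertex, as in a random tree code.
\end{itemize}
The inner constraint consists of root-leaf paths of the concatenated tree, i.e.\ subsets of the elements along such a path. The outer constraint consists of caterpillars, i.e.\ a root-leaf path together with all children of its vertices. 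Since $L^{2L}\ll n^{0.1}$ and $p=n^{0.3}$, the total number of elements is below $n$.

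\textbf{Completeness.} The adaptive policy walks down from the root. At each vertex it probes all $L^2$ children (this is allowed by the outer constraint) and moves to an active child if one exists, which happens with constant probability $1-(1-2/L^2)^{L^2}$. This policy is even committed.

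By a Chernoff bound, within every block a constant fraction of the $D_\ell$ spine vertices are active w.h.p. Each block therefore contributes $\Omega(1)$, and over $L$ meta-layers the total is $\Omega(L)$.

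\textbf{Soundness.} Fix a non-adaptive caterpillar $S$ and any inner-feasible path $P$. We bound the value of the active elements in $S\cap P$ layer by layer. The key step, and the main obstacle, is an analogue of \Cref{claim:ECC_PI} for caterpillar-versus-spine intersections. W.h.p.\ over the random encoding, the following hold for every caterpillar $S$ and every block-path $P$ in layer $\ell$:
\begin{enumerate}[topsep=0pt]
\item[(a)] If $P$ lies in a different block than $S$'s spine, then $|S\cap P|\le D_\ell/L^2$.
\item[(b)] If $P$ lies in the same block and diverges from $S$'s spine at depth $j$, then $P$ meets the legs of $S$ in $O(D_\ell/L)$ places below depth $j$.
\end{enumerate}

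To prove (a), note that bucket-$i$ elements on $S$ number at most $L^2+1$, and $P$ hits each with probability $1/p^2$ (or $1/p$ when the first coordinate is shared). Matching $k$ positions therefore has probability about $(L^2/p)^k$. A union bound over caterpillars and paths costs roughly $L^{O(L^{2L})}$ choices. This is absorbed when $k=D_\ell/L^2$, because $p$ is polynomial in $n$; the parameters must be tuned exactly as in \Cref{claim:ECC_PI}.

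For (b), positions where $P$ diverges share the block's first coordinate, so each match costs $L^2/p$. The union bound over the $L^{2(D_\ell-j)}$ divergent subpaths is absorbed for $k\gtrsim (D_\ell-j)\log L/\log p$, which is $O(D_\ell/L)$.

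\textbf{Summing up.} Given (a) and (b), take the best $P$. On $S$'s own spine, each element is active with probability $2/L^2$, so the expected value is $O(1/L^2)$ per layer and $O(1/L)$ in total.

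Off the spine, $P$ follows $S$'s meta-path only down to some layer $\ell^*$. At layer $\ell^*$, $P$ can gain at most $O(1/L)$ by (b). At each deeper layer it gains at most $O(1/L^2)$ by (a), which is $O(1/L)$ over all layers.

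One subtlety is that $P$ is chosen after the realization. We handle this by noting that (a) and (b) are deterministic bounds on $|S\cap P|$, and each element has value at most $1/D_\ell$. So even counting all elements of $S\cap P$ as active gives the bound, and no concentration over the choice of $P$ is needed. The non-adaptive value is thus $O(1/L)$, completing the gap.
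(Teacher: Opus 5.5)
\textbf{Gap: your union bound for claim (a) fails.}

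Your construction, completeness argument, three-way soundness decomposition, claim (b), and handling of the fact that $P$ is chosen after the realization all match the paper. The problem is the proof of (a).

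A union bound over caterpillars and paths cannot be absorbed at threshold $k=D_\ell/L^2$.
\begin{itemize}
\item Even inside a single pair of blocks there are $L^{4D_\ell}=2^{4D_\ell\log L}$ (path, caterpillar) pairs.
\item The per-pair tail is at best about $\binom{D_\ell}{k}(L^2/p^2)^k=2^{-\Theta(D_\ell\log n/L^2)}$, even counting both coordinates.
\item Since $L^2\log L=\Theta(\log^2 n/\log\log n)\gg\log n$, the count wins at every layer $\ell\ge 2$ (layer $1$ has a single block, so (a) is vacuous there).
\end{itemize}
So ``$p$ is polynomial in $n$'' does not rescue the step, and your count $L^{O(L^{2L})}$ over the whole tree is worse still. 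This is also why the same-block bound (b) is only $\Theta(D_\ell/L)$: individual paths are too numerous for their own randomness to give a $D_\ell/L^2$ bound.

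\textbf{The missing idea: make (a) a statement about blocks, not paths.}

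Two depth-$i$ elements in different blocks $B\neq B'$ can coincide only if the blocks' first coordinates at depth $i$ agree. Hence, deterministically and regardless of all second coordinates, $|S\cap P|\le|\{i: a_B(i)=a_{B'}(i)\}|$ for every caterpillar $S$ in $B$ and every path $P$ in $B'$.

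Now union bound only over pairs of blocks:
\begin{itemize}
\item Layer $\ell$ has $(L^2)^{\sum_{\ell'<\ell}D_{\ell'}}\le L^{4D_\ell/L^2}$ blocks.
\item Each pair agrees on at least $k=D_\ell/L^2$ coordinates with probability at most $\binom{D_\ell}{k}p^{-k}\le(eL^2/p)^k$.
\item The total is therefore $(L^{O(1)}/p)^k\to 0$.
\end{itemize}
This is exactly the paper's argument, giving its bound $(r_L^3/p)^{d_{(<\ell)}+1}$. It is also the reason block depths must grow by a factor $L^2$ per layer: the number of blocks must be only $L^{O(D_\ell/L^2)}$.

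\textbf{Clarify the first coordinate.} The argument requires an independent first coordinate for each (block, depth) pair, i.e.\ a random vector $a_B\in[p]^{D_\ell}$ per block. If ``fixed per block'' meant a single symbol per block, (a) would actually be false. There are far more than $p$ blocks at every layer $\ell\ge 2$, so many blocks share a symbol and then behave like the same block, allowing $\Theta(D_\ell/L)$ overlaps.
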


\subsection{Construction}
Let $L= \Theta(\log n/\log\!\log n)$, where $L^{2L} =o(n^{0.1})$. We  present a randomized construction where the expected value of the optimal adaptive policy is $\Omega(L)$ but every non-adaptive policy has expected value $O(1/L)$, showing an $\Omega(L^2)$ adaptivity gap.

Define parameters  $p := n^{0.4}$ and $r_{\ell}:=L^{2\ell}$ for $\ell \in [L]$  (we will later need $r_L^4 = o(p)$). Also, we set parameters 
\[ d_{\ell} := L^{2\ell-1} = r_{\ell} /L \qquad \text{and} \qquad d_{(<\ell)} := L^{2\ell-2} = d_{\ell} /L = r_{\ell} /L^2.\]

%Define parameters $p := n^{0.3}$  and  $r_{\ell}:=L^{2\ell}$ for $\ell \in [L]$ (we will later need $r_L^3 = o(p)$).  Also, we set  $d_{\ell} := L^{2\ell-1} = r_{\ell} /L$ and $d_{(<\ell)} := L^{2\ell-2} = d_{\ell} /L = r_{\ell} /L^2$.  

%The construction will involve a rooted tree on ``blocks'', where each block itself is a rooted-tree  with edges corresponding to active/inactive elements. Thus, one could view the entire construction as a single \emph{combined rooted tree} with edges corresponding to elements.

\paragraph{Basic setup.}
We consider a rooted tree (the {\em meta-tree}) with $L$ \emph{layers} (starting from root at layer $1$). Each layer will correspond to a different range of values for ``active'' elements, where layer $1$ has the highest values. Each node at layer $\ell$ will correspond to a \emph{block}, which is an $L^2$-ary rooted tree  of depth $r_\ell$ (thus, a block has $(L^2)^{r_\ell}$ leaves). We will use \emph{layer} $\ell$ for level of  a block in the meta-tree and use \emph{height} $h$ for level of an element inside a block (where $h=1$ represent edges out of the root of the block). The combined tree consisting of replacing each node (+ edges to children) of the meta tree with the corresponding block is called the {\em concatenated tree}. 

\paragraph{Comparison with prophet inequality construction.}
The meta-tree for stochastic probing is analogous to the rooted tree for prophet inequality. But the inner structure of the blocks is different:  for stochastic probing we have rooted trees, whereas for prophet inequality we use independently chosen codewords that would be roughly analogous to using disjoint paths instead of a tree for stochastic probing. 

In the double error correcting code construction, this requires us to replace one of the error correcting codes (the lower distance code that differentiates between siblings) with a code defined on the rooted tree, resembling tree codes~\cite{Schulman93,Schulman96}. 

\paragraph{Feasible Sets.} The outer feasible sets will correspond to the \emph{caterpillar} subgraphs along  root-leaf paths of the concatenated tree, i.e., a subset of  edges is feasible to probe iff there exists a root-leaf path such that the probed edges have at least one end-point on the path. The inner feasibility constraints will correspond to root-leaf paths in the concatenated tree. The value of elements in layer $\ell$ will be either $0$ or $1/r_\ell$, depending on whether the element is inactive or active.

\paragraph{Construction of blocks.}
%We consider a rooted tree (the {\em meta-tree} with $L$ \emph{layers} (starting from root at layer $1$. Each layer will correspond to a different range of values for active elements, where layer $1$ has the highest values. Each node at layer $\ell$ will correspond to a \emph{block}, which is an $L^2$-ary rooted tree  of depth $r_\ell$ (thus, a block has $(L^2)^{r_\ell}$ leaves). We will use \emph{layer} $\ell$ for level of  a block in the meta-tree and use \emph{height} $h$ for level of an element inside a block (where $h=1$ represent edges out of the root of the block). The combined tree consisting of replacing each node (+ edges to children) of the meta tree with the corresponding block is called the {\em concatenated tree}. 

Every edge in a block corresponds to some element. 
Blocks at different layers don't share any elements. Edges at different heights (irrespective of their layer and block) also don't share any elements. However, edges at the same layer and height may share elements. Let $U_{\ell,h}$, where $|U_{\ell,h}| = p^2$, denote the universe of elements at height $h$ edges of layer $\ell$. 
(Note that the total number of elements $\sum_{\ell,h} |U_{\ell,h}| \leq p^2 \cdot r_L \cdot L  < n$.)
%Blocks at different layers don't share any elements but blocks at the same layer $\ell$ have elements  from the same universe  $U_{\ell}$,  where $|U_{\ell}| = p^2$.
Each element in $U_{\ell,h}$ has  either value $0$ ({\em inactive}) or $1/r_{\ell}$ ({\em active}). 
Every element, regardless of its layer and height, is active with probability $2/L^2$.
Note that this implies that an all-active root-leaf path in any block has total value $r_\ell\cdot 1/r_\ell  = 1$.

%a rooted tree of depth $r_\ell$ and  with $(L^2)^{r_\ell}$ leaves, as defined above. 
 
%\paragraph{Construction of a block:} We will have $L$ layers of blocks where a \emph{block} at layer $\ell$ is a $L^2$-ary rooted tree with $r_\ell$ \emph{levels}. (We will use ``layers'' for block-layers and ``levels'' for element-levels inside a block.) 

\paragraph{Assignment of elements to blocks (using double ECC).}
%\paragraph{Layer-$\ell$ subsets - double error correcting codes}
Elements at layer $\ell$ and height $h$ belong to $U_{\ell,h}$,  where $|U_{\ell,h}| =p^2$, and each element should be thought of as corresponding to a tuple in $[p]\times [p]$. Thus, to specify an element for an edge at layer $\ell$ and height $h$, it suffices to assign a pair of numbers from $[p]$ to it. %To assign these tuples, we define the following vectors.

% %of a block at layer $\ell$ belong to $U_{\ell}$, where $|U_\ell| =p^2$, and each element should be thought of as corresponding to tuple in $[p]\times [p]$. Thus, to specify an element of a block, it suffices to assign a tuple of numbers in $[p]$ to it. To assign these tuples, we define the following vectors.

% For each layer $\ell \in [L]$, we 
% define a family of vectors of length $r_{\ell}$ over $[p]$, and we define a single vector of length $(L^2)^{r_\ell}$ (which is the same as the number of leaves of a layer-$\ell$ block) over $[p]$:
\begin{itemize}
    \item Take a family of $(r_L)^{d_{(<\ell)}} $ vectors of length $r_{\ell}$ drawn uniformly and independently at random. %\snote{Or shall we go for just $(r_L)^{d_{<\ell}} = (L^2)^{r_{<\ell}} $ vectors to avoid confusion?}
    (These vectors will be assigned to the blocks at layer $\ell$, and the $h$-th coordinate, where $h\in [r_\ell]$, of the vector will correspond to height $h$ edges  of the block.) 
    
    \item For each edge $e$ in a layer-$\ell$ block, draw an independently random $x_e \in [p]$. 
        % Next, draw a single vector of length $(L^2)^{r_\ell} = r_L^{d_\ell}$ uniformly and independently at random. (For any height $h$ of a block, the $i$-th coordinate will correspond to the $i$-th edge at that height.)
\end{itemize}
%We define a block for each vector from the family: i.e.,  the $i$-th element at height $h$ of the block at layer $\ell$ will correspond to the $h$-th coordinate of the first vector and the $i$-th coordinate of the second vector.
For each edge at height $h$, we assign the element of $U_{\ell,h}$ whose first coordinate corresponds to the $h$-th coordinate of the corresponding vector in the first family, and whose second coordinate is $x_e$.

\paragraph{Connecting the layers of the forest.}
We observe that the number of blocks needed at layer $\ell$ is  at most the number of vectors in the family at level $\ell$, since 
\[(L^2)^{r_1 + r_2+ \ldots +r_{\ell-1}} = (L^{2L})^{d_1 + d_2+ \ldots +d_{\ell-1}} \leq (L^{2L})^{d_{(<\ell)}} = (r_L)^{d_{(<\ell)}} ,\]
where the inequality uses
\[
    \sum_{\ell'<\ell} d_{\ell'} = \sum_{\ell'<\ell} L^{2\ell'-1} =  \frac{1}{L}\cdot \frac{L^{2\ell}-1}{L^2-1} =\Theta(L^{2(\ell-1)-1})\leq  L^{2\ell-2}= d_{(<\ell)}.
\]
Hence, we can construct an arbitrary one-to-one mapping from blocks at layer-$\ell$  to the vectors in the family (dropping any extra vectors). %\snote{This step simplifies if we take  exactly $(r_L)^{d_{<\ell}} = (L^2)^{r_{<\ell}}$ vectors.}

%We will later show that any pair of root-leaf paths inside a block at layer-$\ell$ share at most $d_{\ell}$ elements beyond their common prefix. Moreover,  any pair of root-leaf paths in two distinct blocks at the same layer-$\ell$ (i.e., they disagree on the vector in the family) share at most $d_{(<\ell)}$ elements. 

% All elements in a block (corresponding to edges) share the same first number (drawn uniformly in $[p]$). The second number of each element in the block is drawn independently and uniformly at random from $[p]$. Since this is an $L^2$-ary tree of height $r_\ell$, the number of leaves of this block are $(L^2)^{r_\ell}$.
%Each edge in the block is assigned a tuple in $[p]\times [p]$ independently and uniformly at random. 

%Number of potential root-leaf path labelings is $p^{2 r_\ell}$. 

%We construct an arbitrary one-to-one mapping from leaves of layer-$\ell$ blocks (in total there are $(L^2)^{r_1 + r_2+\ldots+r_\ell}$ leaves) to first-family vectors for layer-$(\ell+1)$ blocks, and create an edge between each layer-$\ell$ node and each of the layer-$(\ell+1)$ nodes that share the matching first-family vector. We discard the remaining unmatched layer-$(\ell+1)$ nodes. 

%Notice that, after fixing the families, choosing a layer-$\ell$ subset corresponds to choosing the two vectors, so a total of $p^{d_{\ell}+d_{(<\ell)}}$ choices; this is less than $p^{d_{<\ell+1}}$, or the number of ways to choose just the first vector for layer $\ell+1$.

\subsection{Proof of \Cref{thm:adapGapLB}}

We will show that the expected value of the optimal adaptive policy is $\Omega(L)$ but every non-adaptive policy has expected value $O(1/L)$; hence, we will have an $\Omega(L^2)$ adaptivity gap.

\begin{lemma}[Completeness: Lower bound on \Opt] \label{lem:adapGapOPT}
The expected value of the optimum adaptive solution is $\Omega(L)$, where the expectation is both over the random construction of the stochastic probing instance and the randomness of active/inactive elements.
\end{lemma}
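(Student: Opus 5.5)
The plan is to analyze the simple greedy adaptive policy described in the technical overview and to show that it collects $\Omega(1)$ expected value inside every block on its root-leaf path of the meta-tree. Summing over the $L$ layers then gives $\Omega(L)$.

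The policy starts at the root of the concatenated tree. At the current vertex $v$ it probes all $L^2$ child edges of $v$. If some probed edge is active, it moves to such a child; otherwise it moves to an arbitrary child. This continues until a leaf is reached, and the policy selects the active edges on the resulting root-leaf path. Feasibility is easy to check. The set of probed edges is contained in the caterpillar of the final path, since every probed edge has its upper endpoint on the path, so it is outer feasible. The selected edges lie on a single root-leaf path, so they are inner feasible.

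The main step is the probability bound at each vertex. Conditioned on the history, we want the probability that some child edge of the current vertex $v$ is active to be at least $1-(1-2/L^2)^{L^2}\ge 1-e^{-2}$. The subtlety, and what I expect to be the only real obstacle, is that elements are reused across edges: an element in $U_{\ell,h}$ may have been probed already. First, elements at different heights or layers are disjoint, and the policy probes exactly one set of $L^2$ edges at each height of each block it visits. So all previously probed edges lie at other heights, or in other layers, and their elements are disjoint from those at $v$'s height. Hence the activity of elements at $v$'s height is independent of the history.

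Second, the $L^2$ child edges of $v$ share the same first coordinate (the $h$-th coordinate of the block's vector), while their second coordinates $x_e$ are independent and uniform in $[p]$. They are therefore distinct elements except with probability at most $\binom{L^2}{2}/p = o(1)$. I would handle this either by conditioning on the construction's distinctness event, or by noting that at least one edge is a distinct element, which already gives activity probability at least $2/L^2$. The cleanest route is to condition on distinctness. This costs only an additive $o(1)$ per vertex in expectation, and the expectation in the statement is over the construction as well.

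Given the per-vertex bound, each block at layer $\ell$ has depth $r_\ell$, and along the chosen path each height contributes an active edge of value $1/r_\ell$ with probability at least $(1-e^{-2})-o(1)$. By linearity of expectation the block contributes at least $r_\ell\cdot \Omega(1)\cdot 1/r_\ell=\Omega(1)$. Summing over the $L$ blocks visited, one per layer, the expected value of this adaptive policy is $\Omega(L)$. The optimum adaptive value is at least this, which proves the lemma. No concentration is needed, only linearity of expectation together with the independence argument above.
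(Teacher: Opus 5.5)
Your proposal is correct and follows essentially the same route as the paper: the same greedy policy that probes all $L^2$ children and moves to an active one, an $\Omega(1)$ chance of an active child at every height, and linearity of expectation over the $r_\ell$ heights of each block and the $L$ layers. The only difference is how you handle element reuse: the paper conditions on the (concentrated) fraction of active elements and uses that the children's elements are assigned uniformly at random, whereas you use that the visited vertex's $L^2$ children are distinct with probability $1-O(L^4/p)$, together with the disjointness of the universes $U_{\ell,h}$; this makes explicit the independence from the history that the paper leaves implicit. Your distinctness event should be read as a per-visited-vertex event, since distinctness at all vertices simultaneously fails (there are far more than $p$ vertices).
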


\begin{proof}
We construct an adaptive solution by traversing a root-leaf path: Starting at the root node,  at each node the algorithm probes all the incident edges and then  takes an arbitrary active edge  or an arbitrary edge if all of them are inactive. We will argue that the expected value from each layer $\ell \in [L]$ is $\Omega(1)$, so the total expected value is $\Omega(L)$.

Consider any block at layer $\ell$ and consider the edges at height $h$. 
All possible elements at this height $U_{\ell,h}$  have been assigned to edges uniformly at random. 
%All elements in this block belong to $U_\ell$ and have been assigned to edges uniformly at random. 
Since each element is active w.p. $2/L^2$, w.h.p. $\Theta(1/L^2)$ fraction of $U_{\ell,h}$ elements are active. Thus, the chance that the current node sees an active outgoing edge is $1-(1-\Theta(1/L^2))^{L^2} = \Omega(1)$. Hence, the expected number of active edges on the root-leaf path of a block at layer $\ell$ is $\Omega(r_\ell)$, which means the total value from this block is $\Omega(1)$ as each active edge at layer $\ell$ has value $1/r_\ell$. 
\end{proof}

\begin{lemma}[Soundness: Upper bound on ALG]\label{lem:adapGapALG}
    The generated random instance will satisfy w.h.p.\,that  every non-adaptive algorithm has total expected value at most $O(1/L)$.
\end{lemma}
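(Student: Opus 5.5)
The plan mirrors the proof of \Cref{lemma:PIAlgUB}, with \Cref{claim:ECC_PI} replaced by a caterpillar-versus-spine intersection claim. I will first establish the following.

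\textbf{Claim.} With high probability over the construction, for every layer $\ell$ and every caterpillar $K$ (spine plus legs) in a layer-$\ell$ block, two bounds hold. First, every root-leaf path $P$ in the \emph{same} block shares at most $O(d_\ell)$ elements with $K$ outside the common prefix of $P$ and the spine of $K$. Second, every root-leaf path $P$ in a \emph{different} layer-$\ell$ block shares at most $O(d_{(<\ell)})$ elements with $K$.

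The different-block bound follows the proof of \Cref{claim:ECC_PI}. At each height $h$, a spine element and its $L^2$ legs all share the first coordinate (the $h$-th entry of their block vector). Two distinct block vectors agree on at most $d_{(<\ell)}$ coordinates, with failure probability at most $(r_L^3/p)^{d_{(<\ell)}}$.

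On the remaining heights the first coordinates differ, so $K$ and $P$ use disjoint parts of $U_{\ell,h}$ and share nothing. The one exception is that a spine element could reappear elsewhere, but this is handled the same way: coordinates differ, so the elements differ.

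For the same-block bound, write $h_0$ for the divergence height. For $h>h_0$, the $O(L^2)$ edges of $K$ at height $h$ and the single edge of $P$ at height $h$ are distinct edges. Their second coordinates $x_e$ are independent uniform values in $[p]$, so each height gives a collision with probability $O(L^2/p)$. The probability of more than $d_\ell$ collisions is then at most $\binom{r_\ell}{d_\ell}(L^2/p)^{d_\ell}$.

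The union bound is over caterpillars, which are determined by spines, and over paths $P$: at most $(L^2)^{2r_\ell}=L^{4L d_\ell}$ pairs per block, times the number of blocks. Since $r_L^4\ll p$, this total is $(\mathrm{poly}(r_L)/p)^{d_\ell}$, which is negligible. I expect this parameter check, particularly the number of blocks times $(L^2)^{2r_\ell}$ against $p^{-d_\ell}$, to be the main obstacle; this is why $p=n^{0.4}$ rather than $n^{0.3}$. Edge-versus-element identity issues (repeated elements along the legs) also need some care.

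With the claim in hand, fix a non-adaptive probed set $S$. It lies inside a caterpillar $K$ along a concatenated root-leaf path, visiting block $B_\ell$ at each layer $\ell$. Its value is $\E[\max_P \sum_{i\in S\cap P} X_i]$, and I bound the contribution of each layer separately.

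Within layer $\ell$, $P$ runs through some block $B'$. If $B'\neq B_\ell$, the intersection has size $O(d_{(<\ell)})$, so value at most $O(1/L^2)$. If $B'=B_\ell$, then $P$ shares a prefix with the spine, plus $O(d_\ell)$ further elements. That gives value at most $O(1/L)$ from the non-prefix part, plus the active elements on the spine of $K$, which number about $(2/L^2)r_\ell$ in expectation and are w.h.p.\ concentrated.

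This gives $O(1/L)$ per layer, which is too weak. To do better I would argue as in the prophet proof that only $O(1)$ layers can contribute $\Theta(1/L)$: the inner path $P$ is a single concatenated path. If $P$ leaves the spine of $K$ inside layer $\ell$'s block, then at every deeper layer $\ell'>\ell$ it lies in a block different from $B_{\ell'}$ and gets only $O(1/L^2)$. At every shallower layer $\ell'<\ell$ it follows the spine exactly and gets the spine's active value, $O(1/L^2)$ in expectation.

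Summing gives $O(1/L)$ from the divergence layer plus $L\cdot O(1/L^2)=O(1/L)$ from the other layers. The subtlety is that the max over $P$ is taken after the realization. I handle this by bounding the total active value on the spine of $K$, which is $O(1/L)$ w.h.p.\ by Chernoff, uniformly over the choice of $P$, and by applying the claim deterministically to all $P$ simultaneously.
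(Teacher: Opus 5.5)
Your proposal is correct and follows essentially the same route as the paper. It uses the same two-part intersection claim: paths in distinct blocks can share elements only at the at most $d_{(<\ell)}$ heights where the block vectors agree, and within a block the independent $x_e$'s give at most $d_\ell$ collisions past the divergence point. It then makes the same three-way split at the layer where the inner path $P$ first leaves the spine, and handles the $\max_P$ inside the expectation through the spine value, which does not depend on $P$.

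The one difference is cosmetic: your Chernoff step is unnecessary, since the spine value does not depend on $P$, so linearity of expectation already bounds its contribution by $2/L$.
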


\begin{proof} The proof relies on the following claim regarding the number of intersecting elements between paths and caterpillars inside blocks. (Recall that different layers have no intersection.)
    
\begin{claim}\label{claim:ECC_AdapGap}
For any layer $\ell$,  w.h.p., we have the following two properties:
    \begin{itemize}
        \item Different blocks: Any root-leaf path and any caterpillar inside two different blocks have at most $d_{(<\ell)}$ heights where they share an element. (Observe that we are not counting how many intersections happen at the same height.)

        \item Same block: Any root-leaf path $P$ and caterpillar $C$ inside the same block share at most $d_{\ell}$ elements beyond their common prefix.
    \end{itemize}
    \end{claim}
    
% \begin{proof}
%     The heart of the proof lies in the following claim, which is the only part of  the proof where we need the randomness of the generated instance.
%     \begin{claim}\label{claim:ECC_AdapGap}
%     With high probability, every pair of root-leaf paths inside a block at layer-$\ell$ share at most $d_{\ell}$ elements beyond their common prefix.
%     Furthermore, if they are not in the same block  (i.e., they disagree on the vector in the family), they  share at most $d_{(<\ell)}$ elements.     
%     \end{claim}

%\snote{Better to have different $U_\ell$ base elems for each level. The revised claim is to bound the intersection of any caterpillar and any root-leaf path (only after the common path P since anyways the universe $U_\ell$ is different across levels). }

    Before proving the claim, we use it to complete the proof of the lemma. We will show that for any random instance given by \Cref{claim:ECC_AdapGap},  every non-adaptive solution has expected value $O(1/L)$.
    
    Consider any non-adaptive solution, which is a caterpillar $C$ along a root-leaf path (spine) in the concatenated tree. First, the expected value from the spine of this path is $O(1/L)$, since each element is active only with probability $1/L^2$. Next, we argue that even if the non-adaptive solution considers a different root-leaf path  $P$ for its value in the inner constraint (recall that each root-leaf path is feasible in the inner constraint and our objective is to bound $\E\big[\max_P [\cdot] \big]$, not $\max_P \big[\E [\cdot ]\big]$), it can receive at most $O(1/L)$ additional value.

    Towards proving this, consider the first layer $\ell$ (starting from root at layer $1$)  where path $P$ deviates from the caterpillar $C$.
    We consider the three cases:
    \begin{enumerate}
        \item From layers  $<\ell$  (i.e., above $\ell$), the total value is $O(1/L)$ since we already accounted for it in the spine.

        \item For each layer  $\ell'>\ell$ (i.e., below $\ell$), since the blocks of the spine and $P$ are different, there are at most $d_{(<\ell')}$ levels where they may share an element  by \Cref{claim:ECC_AdapGap}, which means the total value contribution from this block is at most $d_{(<\ell')}/r_{\ell'} =  1/L^2$. Summing over all layers $\ell'>\ell$, the sum is $O(1/L)$.

        \item From layer $\ell$, the caterpillar $C$ and path $P$ overlap until some node in this block. The value above this node in the block is still $O(1/L)$ since we still follow the spine. For the total value below, we know that path $P$ and  caterpillar $C$ share at most $d_{\ell}$ elements inside this block by \Cref{claim:ECC_AdapGap}, which means their total value contribution is at most $d_{\ell}/r_\ell = O(1/L)$.    \qedhere
    \end{enumerate}
\end{proof}

Finally, we prove the missing \Cref{claim:ECC_AdapGap}, which can be thought of as some kind of distance guarantee of the random (double) error correcting code.

\begin{proof}[Proof of \Cref{claim:ECC_AdapGap}] We prove the claim separately for different blocks and the same block.

    \textbf{Different blocks.} Consider any  root-leaf path and a caterpillar in two different blocks. Note that for them to agree on $d_{(<\ell)}$ elements, the  two vectors from the family that correspond to these two blocks have to agree on at least $d_{(<\ell)}$ coordinates since elements from different levels appear from disjoint universe.  
    We prove below that, w.h.p., every pair of vectors in the  family  agree on at most $d_{(<\ell)}$ coordinates.

Consider any vector $A$ in the family, and draw the entries of the second vector $B$ one-by-one.
Each one has probability $1/p$ of matching the corresponding entry of $A$.
Thus the probability that a particular subset of $d_{(<\ell)}+1$ entries from $B$ agrees with $A$ is
$(1/p)^{d_{(<\ell)}+1}$.
Taking a union bound over all $(d_{(<\ell)}+1)$-subsets of $B$'s entries, and all pairs $A,B$ in the family, we get
\[
(1/p)^{d_{(<\ell)}+1}
\times
\binom{r_\ell}{d_{(<\ell)}+1}
\times
\binom{r_L^{d_{(<\ell)}}}{2}
<
(r_L^3/p)^{d_{(<\ell)}+1},
\]
which is negligible since $r_L^3 \ll p$.

\textbf{Same block.} Similarly, now consider any root-leaf path $P$ and a caterpillar $C$ in the same block. 
For them to agree on $d_{\ell}$ elements beyond their common prefix,  the corresponding two vectors from the second vectors have to agree on at least $d_{\ell}$ coordinates. 

Consider the second vector $A$ (i.e., not the  vector in family) corresponding to any path $P$. Now draw  entries of the caterpillar one-by-one (note that a caterpillar of length $r_\ell$ has at most $(r_\ell \cdot L^2)$  elements). Each entry has probability at most $1/p$ of matching with the corresponding entry of $A$. 
Thus, the probability that a particular subset of $d_\ell+1$ entries from the caterpillar agrees with $A$ is at most
$(1/p)^{d_\ell+1}$.
Taking a union bound over all $(d_\ell+1)$-subsets of $C$'s entries, and all pairs of paths $P$ and caterpillars $C$, we get
\[
(1/p)^{d_\ell+1}
\times
\binom{L^2 r_\ell}{d_\ell+1}
\times
(r_L^{d_\ell})^2
<
(r_L^4/p)^{d_\ell+1},
\]
which is negligible since $r_L^4 \ll p$.
\end{proof}

Putting \Cref{lem:adapGapOPT} and \Cref{lem:adapGapALG} completes the proof of \Cref{thm:adapGapLB}.

\clearpage
%\begin{small}
\bibliographystyle{alpha}
\bibliography{bib}

\newcommand{\etalchar}[1]{$^{#1}$}
\begin{thebibliography}{CHMS10}

\bibitem[AKW19]{AzarKW19}
Pablo~Daniel Azar, Robert Kleinberg, and S.~Matthew Weinberg.
\newblock Prior independent mechanisms via prophet inequalities with limited
  information.
\newblock {\em Games Econ. Behav.}, 118:511--532, 2019.

\bibitem[ANS08]{ANS-WINE08}
Arash Asadpour, Hamid Nazerzadeh, and Amin Saberi.
\newblock Stochastic submodular maximization.
\newblock In {\em Proceedings of 4th International Workshop on Internet and
  Network Economics, {WINE}}, volume 5385, pages 477--489. Springer, 2008.

\bibitem[BIK07]{BabaioffIK07}
Moshe Babaioff, Nicole Immorlica, and Robert Kleinberg.
\newblock Matroids, secretary problems, and online mechanisms.
\newblock In {\em Proceedings of the Eighteenth Annual {ACM-SIAM} Symposium on
  Discrete Algorithms, {SODA}}, pages 434--443. {SIAM}, 2007.

\bibitem[BIKK18]{babaioff2018matroid}
Moshe Babaioff, Nicole Immorlica, David Kempe, and Robert Kleinberg.
\newblock Matroid secretary problems.
\newblock {\em Journal of the ACM (JACM)}, 65(6):1--26, 2018.

\bibitem[BSZ19]{BSZ-RANDOM19}
Domagoj Bradac, Sahil Singla, and Goran Zuzic.
\newblock (near) optimal adaptivity gaps for stochastic multi-value probing.
\newblock In {\em Proceedings of {APPROX/RANDOM}}, pages 49:1--49:21, 2019.

\bibitem[CHMS10]{CHMS-STOC10}
Shuchi Chawla, Jason~D Hartline, David~L Malec, and Balasubramanian Sivan.
\newblock Multi-parameter mechanism design and sequential posted pricing.
\newblock In {\em Proceedings of the forty-second ACM symposium on Theory of
  computing, {STOC}}, pages 311--320, 2010.

\bibitem[CIK{\etalchar{+}}09]{CIKMR-ICALP09}
Ning Chen, Nicole Immorlica, Anna~R Karlin, Mohammad Mahdian, and Atri Rudra.
\newblock Approximating matches made in heaven.
\newblock In {\em International colloquium on automata, languages, and
  programming}, pages 266--278. Springer, 2009.

\bibitem[Dob07]{dobzinski2007two}
Shahar Dobzinski.
\newblock Two randomized mechanisms for combinatorial auctions.
\newblock In {\em Proceedings of {APPROX/RANDOM}}, pages 89--103. Springer,
  2007.

\bibitem[Dyn63]{Dynkin}
E.~B. Dynkin.
\newblock {The optimum choice of the instant for stopping a Markov process}.
\newblock {\em Soviet Math. Dokl}, 4, 1963.

\bibitem[EIV23]{EIV-Book23}
F.~Echenique, N.~Immorlica, and V.~V. Vazirani.
\newblock {\em Online and Matching-Based Market Design}.
\newblock Cambridge University Press, 2023.

\bibitem[Fre75]{Freedman1975}
David~A. Freedman.
\newblock On tail probabilities for martingales.
\newblock {\em Ann. Probab.}, 3(1):100--118, 02 1975.

\bibitem[GN13]{GN-ICALP13}
Anupam Gupta and Viswanath Nagarajan.
\newblock A stochastic probing problem with applications.
\newblock In {\em Proceedings of {IPCO}}, volume 7801, pages 205--216, 2013.

\bibitem[GNS16]{GNS-SODA16}
Anupam Gupta, Viswanath Nagarajan, and Sahil Singla.
\newblock Algorithms and adaptivity gaps for stochastic probing.
\newblock In {\em Proceedings of Symposium on Discrete Algorithms, {SODA}},
  pages 1731--1747, 2016.

\bibitem[GNS17]{GNS-SODA17}
Anupam Gupta, Viswanath Nagarajan, and Sahil Singla.
\newblock {Adaptivity Gaps for Stochastic Probing: Submodular and XOS
  Functions}.
\newblock In {\em Proceedings of Symposium on Discrete Algorithms, {SODA}},
  pages 1688--1702, 2017.

\bibitem[GS20]{GS-Book20}
Anupam Gupta and Sahil Singla.
\newblock Random-order models.
\newblock In Tim Roughgarden, editor, {\em Beyond the Worst-Case Analysis of
  Algorithms}, pages 234--258. Cambridge University Press, 2020.

\bibitem[KMS24]{KMS-STOC24}
Thomas Kesselheim, Marco Molinaro, and Sahil Singla.
\newblock Supermodular approximation of norms and applications.
\newblock In {\em Proceedings of Symposium on Theory of Computing, {STOC}},
  pages 1841--1852, 2024.

\bibitem[KS77]{KrengelS77}
U.~Krengel and L.~Sucheston.
\newblock Semiamarts and finite values.
\newblock {\em Bulletin of the American Mathematical Society}, 83:745--747,
  1977.

\bibitem[KS78]{KrengelS78}
U.~Krengel and L.~Sucheston.
\newblock On semiamarts, amarts, and processes with finite value.
\newblock {\em Advances in Probability and Related Topics}, 4:197--266, 1978.

\bibitem[LLZ25]{LiLZ25}
Jian Li, Yinchen Liu, and Yiran Zhang.
\newblock Adaptivity gaps for stochastic probing with subadditive functions.
\newblock {\em CoRR}, abs/2504.15547, 2025.
\newblock To appear in FOCS 2025.

\bibitem[Luc17]{Lucier17}
Brendan Lucier.
\newblock An economic view of prophet inequalities.
\newblock {\em SIGecom Exch.}, 16(1):24--47, 2017.

\bibitem[PRS23]{PRS-APPROX23}
Kalen Patton, Matteo Russo, and Sahil Singla.
\newblock Submodular norms with applications to online facility location and
  stochastic probing.
\newblock In {\em Proceedings of {APPROX/RANDOM}}, volume 275, pages
  23:1--23:22, 2023.

\bibitem[RS17]{RubinsteinS17}
Aviad Rubinstein and Sahil Singla.
\newblock Combinatorial prophet inequalities.
\newblock In {\em Proceedings of Symposium on Discrete Algorithms, {SODA}},
  pages 1671--1687. {SIAM}, 2017.

\bibitem[Rub16]{Rubinstein16}
Aviad Rubinstein.
\newblock Beyond matroids: secretary problem and prophet inequality with
  general constraints.
\newblock In {\em Proceedings of the 48th Annual {ACM} {SIGACT} Symposium on
  Theory of Computing, {STOC}}, pages 324--332. {ACM}, 2016.

\bibitem[Sch93]{Schulman93}
Leonard~J. Schulman.
\newblock Deterministic coding for interactive communication.
\newblock In {\em Proceedings of Symposium on Theory of Computing, {STOC}},
  pages 747--756. {ACM}, 1993.

\bibitem[Sch96]{Schulman96}
Leonard~J. Schulman.
\newblock Coding for interactive communication.
\newblock {\em {IEEE} Trans. Inf. Theory}, 42(6):1745--1756, 1996.

\end{thebibliography}
%\end{small}

\end{document}